\colorlet{shadecolor}{gray!12}
\newcommand{\g} {\mbox{digraph}}
\newtheorem{remark}{Remark}
\newtheorem{proposition}{Proposition}
\newenvironment{desctight}
  {\begin{list}{}{\setlength\labelwidth{0pt}
        \setlength{\itemsep}{0.5pt}
        \setlength{\parsep}{0pt}
        \setlength\itemindent{-\leftmargin}
        }}
    {\end{list}}
\newtheorem{theorem}{Theorem}[section]
\newtheorem{example}[theorem]{Example}
\newtheorem{corollary}[theorem]{Corollary}
\newtheorem{lemma}[theorem]{Lemma}
\newtheorem{observation}[theorem]{Observation}
\newtheorem{definition}[theorem]{Definition}
\newenvironment{proof}{\noindent{\bf Proof~}}{\null\hfill $\Box$\par\medskip}
\newcommand{\bigo}{\text{$\mathcal O$}}
\newcommand{\MSOA}{\text{MSO}_1}
\newcommand{\ideg}{\text{indegree}}
\newcommand{\odeg}{\text{outdegree}}
\newcommand{\un} {{\it un}}
\newcommand{\LMSOA}{\text{LinEMSO}_1}
\newcommand{\dcws} {\text{d-cw}}
\newcommand{\fpt} {\mbox{FPT}}
\newcommand{\xp} {\mbox{XP}}
\newcommand{\w} {\mbox{W}}
\newcommand{\DCN} {\text{DCN}}
\newcommand{\reach} {\text{reach}}
\newcommand{\sym} {\mbox{sym}}
\newcommand{\asym} {\mbox{asym}}
\newcommand{\dmws} {\text{dmw}}
\newcommand{\arc} {\mbox{arc}}
\newcommand{\p} {\mbox{P}}
\newcommand{\np} {\mbox{NP}}
\begin{document}

%\title{Computing the dichromatic number on digraphs of bounded directed clique-width}

\title{Acyclic coloring of special digraphs\thanks{An extended abstract of this paper will appear in the
Proceedings of the 7th Annual International Conference on Algorithms and Discrete Applied Mathematics (CALDAM 2021) \cite{GKR21}.}}

\author{Frank Gurski}
\author{Dominique Komander}
\author{Carolin Rehs}

\affil{\small University of  D\"usseldorf,
Institute of Computer Science, Algorithmics for Hard Problems Group,\newline 
40225 D\"usseldorf, Germany}

\maketitle

%%%%%%%%%%%%%%%%%%%%%%%%%%%%%%%%%%%%%%%%%%%%%%%%%%%%%%%%%%%%%%%%%%%%%%%%%%%
%%%%%%%%%%%%%%%%%%%%%%%%%%%%%%%%%%%%%%%%%%%%%%%%%%%%%%%%%%%%%%%%%%%%%%%%%%%
%%%%%%%%%%%%%%%%%%%%%%%%%%%%%%%%%%%%%%%%%%%%%%%%%%%%%%%%%%%%%%%%%%%%%%%%%%%

\begin{abstract}
An acyclic $r$-coloring of a directed graph $G=(V,E)$ is a
partition of the vertex set $V$ into $r$ acyclic sets. The dichromatic
number of a directed graph $G$ is the  smallest $r$ such that $G$
allows an acyclic $r$-coloring. For symmetric digraphs the dichromatic number equals
the well-known chromatic number of the underlying undirected graph.
This allows us to carry over the $\w[1]$-hardness and lower bounds for running times
of the chromatic number problem parameterized by clique-width
to the dichromatic number problem parameterized by directed
clique-width.
We introduce the first polynomial-time algorithm for the acyclic coloring problem
on digraphs of constant directed clique-width. 
From a parameterized point of view our algorithm shows that
the  Dichromatic Number problem  is in $\xp$ when parameterized by directed clique-width 
and extends the only known structural parameterization by directed modular width for this problem.
Furthermore, we apply defineability within monadic second order logic 
in order to show that Dichromatic Number problem is in $\fpt$ when parameterized by the directed clique-width 
and $r$.

For directed co-graphs, which is a class of digraphs of directed clique-width 2, and several
generalizations we even show linear time solutions for computing the
dichromatic number. Furthermore, we conclude that directed co-graphs and
the considered generalizations lead to subclasses of perfect digraphs.
For  directed cactus forests, which is a set of digraphs of directed tree-width 1, 
we conclude  an upper bound of $2$ for the
dichromatic number and we show that  an optimal acyclic coloring 
can be computed in linear time.

\bigskip
\noindent
{\bf Keywords:} 
acyclic coloring; directed clique-width; directed co-graphs; polynomial time algorithms
\end{abstract}

%%%%%%%%%%%%%%%%%%%%%%%%%%%%%%%%%%%%%%%%%%%%%%%%%%%%%%%%%%%%%%%%%%%%%%%%%%%
%%%%%%%%%%%%%%%%%%%%%%%%%%%%%%%%%%%%%%%%%%%%%%%%%%%%%%%%%%%%%%%%%%%%%%%%%%%
%%%%%%%%%%%%%%%%%%%%%%%%%%%%%%%%%%%%%%%%%%%%%%%%%%%%%%%%%%%%%%%%%%%%%%%%%%%

%%%%%%%%%%%%%%%%%%%%%%%%%%%%%%%%%%%%%%%%%%%%%%%%%%%%%%%%%%%%%%%%%%%%%%%%%%
\section{Introduction}
%%%%%%%%%%%%%%%%%%%%%%%%%%%%%%%%%%%%%%%%%%%%%%%%%%%%%%%%%%%%%%%%%%%%%%%%%%

Within undirected graphs $G=(V,E)$ a {\em $r$-coloring} 
corresponds to a partition of the vertex set $V$ into $r$ independent sets. 
The smallest $r$ such that a graph
$G$ has a $r$-coloring is denoted as the {\em chromatic number} of $G$.
Even deciding whether a graph has a $3$-coloring is NP-complete but 
there are many efficient solutions for the coloring
problem on special graph classes. Among these are  chordal graphs \cite{Gol80},  
comparability graphs \cite{Hoa94}, and co-graphs \cite{CLS81}.

For oriented graphs the concept of oriented colorings, which has been introduced  by Courcelle \cite{Cou94}, received a lot of attention in \cite{CD06,Sop16,GH10,GHKLOR14,CFGK16,GKR19d,GKL20,GKL21}.
An {\em oriented $r$-coloring} of an oriented graph $G=(V,E)$ is a partition of the vertex set $V$ into $r$ independent sets, such
that all the arcs linking two of these subsets have the same direction.
The {\em oriented chromatic number} of an oriented graph $G$, denoted by $\chi_o(G)$, is the  smallest $r$ such that $G$
allows an oriented $r$-coloring.

In this paper, we consider an approach for coloring the vertices of directed graphs, introduced by Neumann-Lara \cite{NL82}.
An {\em acyclic $r$-coloring} of a digraph $G=(V,E)$ is a partition of the
vertex set $V$ into $r$  sets such that all sets  induce an acyclic subdigraph in $G$.
The {\em dichromatic number} of $G$ is the smallest integer $r$
such that $G$ has an acyclic $r$-coloring.
%There are several works on acyclic  coloring of digraphs \cite{NL82,Moh03,BFJKM04} and several
%recent works \cite{LM17,MSW19,SW20}.
%The dichromatic number is one of two basic concepts
%for the class of perfect digraphs \cite{AH15}.
%
%
%become apparent that the dichromatic
%number acts as a natural directed counterpart of the chromatic number of an undirected
%graph.
%
Acyclic colorings of digraphs received a lot of attention in \cite{BFJKM04,Moh03,NL82}
and also in recent works \cite{LM17,MSW19,SW20}. The dichromatic number is one of two basic concepts
for the class of perfect digraphs \cite{AH15} and can be regarded as a natural
counterpart of the well known chromatic number for undirected graphs.

In the Dichromatic Number problem ($\DCN$) there is given a digraph
$G$ and an integer $r$ and the question is whether $G$ has an acyclic $r$-coloring.
If $r$ is constant and not part of the input, the corresponding problem
is denoted by $\DCN_{r}$. Even  $\DCN_{2}$ is NP-complete  \cite{FHM03},
which motivates to consider the Dichromatic
Number problem on special graph classes. 
Up to now, only few classes of digraphs are known for which the
dichromatic number can be found in polynomial time.
The set of DAGs is obviously equal to the set of digraphs
of  dichromatic  number  $1$. Further,
every odd-cycle free digraph \cite{NL82} and every non-even digraph~\cite{MSW19}
has  dichromatic number at most $2$.

The hardness of the Dichromatic Number problem remains true, even  for inputs
of bounded  directed feedback vertex set size \cite{MSW19}.
This result implies that there are no $\xp$-algorithms\footnote{XP is the class
of all parameterized problems which can be solved by algorithms that are polynomial
if the parameter is considered as a constant \cite{DF13}.}  for the Dichromatic Number problem parameterized by directed width
parameters such as directed path-width, directed tree-width, DAG-width or Kelly-width, since all of these are upper bounded in terms of the size of a smallest feedback vertex set.
The first positive result concerning structural parameterizations of the Dichromatic Number problem
is the existence of an $\fpt$-algorithm\footnote{FPT is the class of all parameterized problems which can
be solved by algorithms that are exponential only in the size of a fixed parameter while
polynomial in the size of the input size \cite{DF13}.} for the  Dichromatic Number problem
parameterized by directed modular width \cite{SW19}.

In this paper, we introduce the first polynomial-time algorithm for the Dichromatic Number problem
on digraphs of constant directed clique-width.
Therefore, we consider a directed clique-width expression $X$
of the input digraph $G$ of directed clique-width $k$.
For each node $t$ of the corresponding rooted expression-tree $T$ we use  label-based reachability
information about the subgraph $G_t$ of the subtree rooted at $t$.
For every  partition of the vertex set of $G_t$ into acyclic
sets $V_1,\ldots,V_s$ we compute the multi set
$\langle \reach(V_1),\ldots,\reach(V_s) \rangle$, where $\reach(V_i)$, $1\leq i \leq s$,
is the set of all label pairs $(a,b)$ such that the subgraph of $G_t$ induced
by $V_i$ contains a vertex labeled by $b$, which is reachable by a vertex labeled by $a$.
By using bottom-up dynamic programming along expression-tree $T$,
we obtain an algorithm for the Dichromatic Number problem of running
time $n^{2^{\bigo(k^2)}}$ where $n$ denotes the number of vertices of the input
digraph. Since any algorithm with running time in $n^{2^{o(k)}}$  would disprove the Exponential Time Hypothesis (ETH),  the exponential dependence on $k$ in the degree
of the polynomial cannot be avoided, unless ETH fails.

From a parameterized point of view, our algorithm shows that
the  Dichromatic Number problem  is in $\xp$ when parameterized by directed clique-width.
Further, we show that  the Dichromatic Number problem is $\w[1]$-hard on symmetric digraphs when parameterized by directed clique-width.
Inferring from this, there is no $\fpt$-algorithm for the Dichromatic Number problem parameterized by directed clique-width under reasonable assumptions. The best parameterized complexity which can be achieved is given by an $\xp$-algorithm.
Furthermore, we apply defineability within monadic second order logic (MSO)
in order to show that Dichromatic Number problem in $\fpt$ when parameterized by the directed clique-width 
and $r$, which implies that for  every integer $r$ it holds that  $\DCN_{r}$
is in $\fpt$ when parameterized by directed clique-width.

Since the directed clique-width of a digraph is at most its directed modular width \cite{SW20},
we reprove the existence of an $\xp$-algorithm for $\DCN$
and an $\fpt$-algorithm for $\DCN_{r}$
parameterized by directed modular width \cite{SW19}.
On the other hand, there exist several classes of digraphs of bounded directed clique-width and
unbounded directed modular width, which implies that  directed clique-width is the more
powerful parameter and thus the results of \cite{SW19} does not imply any parameterized algorithm
for directed clique-width.

In Table \ref{tab} we summarize the known results for $\DCN$ and $\DCN_r$ parameterized by  parameters.

\begin{table}[h!]
\begin{center}
\begin{tabular}{l||ll|ll|}
 parameter                       & \multicolumn{2}{c|}{$\DCN$} &\multicolumn{2}{c|}{$\DCN_r$} \\
\hline
directed tree-width     & $\not\in\xp$ & Corollary \ref{cor-xp-ro}        & $\not\in\xp$   & Corollary \ref{cor-xp-ro}  \\
\hline
directed path-width     & $\not\in\xp$ &   Corollary \ref{cor-xp-ro}         & $\not\in\xp$  &Corollary \ref{cor-xp-ro}  \\
\hline
DAG-width               &$\not\in\xp$  &  Corollary \ref{cor-xp-ro}         & $\not\in\xp$  &Corollary \ref{cor-xp-ro}  \\
\hline
Kelly-width             & $\not\in\xp$ &  Corollary \ref{cor-xp-ro}         & $\not\in\xp$   &Corollary \ref{cor-xp-ro}   \\

\hline
directed modular width  & \fpt  & \cite{SW19}           & \fpt   &  \cite{SW19}   \\
\hline
directed clique-width   & \w[1]-hard & Corollary \ref{hpcw}   & \fpt   & Corollary \ref{fpt-dc}   \\
                        & \xp        & Corollary \ref{xp-dc}  &        &               \\
\hline
standard parameter  $r$       & $\not\in\xp$     & Corollary \ref{cor-xp-r} &  ///   &   \\
\hline
directed clique-width $+$ $r$   &  \fpt   &  Theorem    \ref{fpt-cw-r}               & ///       &    \\

%\hline
%tree-width of $u(G)$     &      &                        &      &        \\
\hline
clique-width of $\un(G)$     &  $\not\in\fpt$     &  Corollary \ref{cor-fpt-un}                     &  open    &       \\
\hline

number of vertices $n$   &  \fpt  &  Corollary \ref{fpt-n} & \fpt &  Corollary \ref{fpt-n} \\
\end{tabular}
\end{center}
\caption{Complexity of $\DCN$ and $\DCN_r$ parameterized by  parameters.
We assume that $\p\neq \np$. The ''///'' entries indicate that by taking $r$ out of the instance the considered parameter
makes no sense.
\label{tab}}
\end{table}

For directed co-graphs, which is a class of digraphs of directed clique-width 2 \cite{GWY16}, and several
generalizations we even show a linear time solution for computing the
dichromatic number and  an optimal acyclic coloring. Furthermore, we conclude that directed co-graphs and
the considered generalizations lead to subclasses of perfect digraphs  \cite{AH15}.
For  directed cactus forests, which is a set of digraphs of directed tree-width 1 \cite{GR19}, the 
results of  \cite{Wie20} and \cite{MSW19} lead to an upper bound of $2$ for the
dichromatic number and that an optimal acyclic coloring 
can be computed in polynomial time. We show that this even can be done in linear time.

%%%%%%%%%%%%%%%%%%%%%%%%%%%%%%%%%%%%%%%%%%%%%%%%%%%%%%%%%%%%%%%%%%%%%%%%%%%
\section{Preliminaries}\label{intro}
%%%%%%%%%%%%%%%%%%%%%%%%%%%%%%%%%%%%%%%%%%%%%%%%%%%%%%%%%%%%%%%%%%%%%%%%%%%

We use the notations of Bang-Jensen and Gutin \cite{BG18} for graphs and digraphs.

\subsection{Directed graphs}

A {\em directed graph} or {\em digraph} is a pair  $G=(V,E)$, where $V$ is
a finite set of {\em vertices} and
$E\subseteq \{(u,v) \mid u,v \in V,~u \not= v\}$ is a finite set of ordered pairs of distinct
vertices called {\em arcs} or {\em directed edges}.
For a vertex $v\in V$, the sets $N^+(v)=\{u\in V~|~ (v,u)\in E\}$ and
$N^-(v)=\{u\in V~|~ (u,v)\in E\}$ are called the {\em set of all successors}
and the {\em set of all  predecessors} of $v$.
The  {\em outdegree} of $v$, $\odeg(v)$ for short, is the number
of successors of $v$ and the  {\em indegree} of $v$, $\ideg(v)$ for short,
is the number of predecessors of $v$.

A digraph $G'=(V',E')$ is a {\em subdigraph} of digraph $G=(V,E)$ if $V'\subseteq V$
and $E'\subseteq E$.  If every arc of $E$ with both end vertices in $V'$  is in
$E'$, we say that $G'$ is an {\em induced subdigraph} of digraph $G$ and we
write $G'=G[V']$.

The {\em out-degeneracy} of a digraph $G$ is the least integer $d$
such that $G$ and all subdigraphs of $G$ contain a vertex of outdegree at most $d$.

For some given digraph $G=(V,E)$ we define
its {\em underlying undirected graph} by ignoring the directions of the arcs, i.e.
$\un(G)=(V,\{\{u,v\}~|~(u,v)\in E, u,v\in V\})$.
There are several ways to define a digraph $G=(V,E)$ from an undirected
graph $G'=(V,E')$.
If we replace every edge $\{u,v\}\in E'$ by
\begin{itemize}
%\item
%one or both of the arcs $(u,v)$ and $(v,u)$, we denote $G$ as a {\em biorientation} of $G'$.
%Every digraph $G$  which can be obtained by a biorientation of some undirected
%graph $G'$ is called a {\em bioriented graph}.

\item
both arcs $(u,v)$ and $(v,u)$, we refer to $G$ as a {\em complete biorientation} of $G'$.
Since in this case $G$ is well defined by $G'$ we also denote
it by $\overleftrightarrow{G'}$.
Every digraph $G$  which can be obtained by a complete biorientation of some undirected
graph $G'$ is called a {\em complete bioriented graph} or {\em symmetric digraph}.

\item
one of the arcs $(u,v)$ and $(v,u)$, we refer to $G$ as an {\em orientation} of $G'$.
Every digraph $G$  which can be obtained by an orientation of some undirected
graph $G'$ is called an {\em oriented graph}.
\end{itemize}

For a digraph $G=(V,E)$ an arc $(u,v)\in E$ is {\em symmetric} if $(v,u)\in E$.
Thus, each bidirectional arc is symmetric. Further, an arc is {\em asymmetric} if it is not symmetric. 
We define the symmetric part of $G$ as $\sym(G)$,
which is the spanning subdigraph of $G$ that contains exactly the symmetric arcs of $G$.
Analogously, we define the asymmetric part of $G$ as $\asym(G)$, which is the spanning
subdigraph with only asymmetric arcs.

By $\overrightarrow{P_n}=(\{v_1,\ldots,v_n\},\{ (v_1,v_2),\ldots, (v_{n-1},v_n)\})$, $n \ge 2$,
we denote the directed path on $n$ vertices,
by $\overrightarrow{C_n}=(\{v_1,\ldots,v_n\},\{(v_1,v_2),\ldots, (v_{n-1},v_n),(v_n,v_1)\})$, $n \ge 2$,
we denote the directed cycle on $n$ vertices.
%An {\em oriented forest (tree)} is an orientation of a forest (tree).
%An  {\em out-(rooted-)tree} ({\em in-(rooted-)tree})  is an oriented tree with a distinguished root such that
%all arcs are directed away from (directed to) the root.

A {\em directed acyclic graph (DAG)} is a digraph without any $\overrightarrow{C_n}$,
for $n\geq 2$, as subdigraph. A vertex $v$ is {\em reachable} from a vertex $u$ in $G$ if $G$ contains
a $\overrightarrow{P_n}$ as a subdigraph having start vertex $u$ and
end vertex $v$.
% For some vertex $y\in V$
%we define its  predecessors by $N^{-}(y)=\{x\in V~|~ (x,y)\in E\}$.
%
%
%
%A {\em topological ordering} of a directed graph is a linear ordering of its vertices
%such that for every directed edge $(u,v)$, vertex $u$ comes before vertex $v$ in the ordering.
A digraph is {\em odd cycle free} if it does not contain a $\overrightarrow{C_n}$,
for odd $n\geq 3$, as subdigraph.
A  digraph $G$ is planar if $\un(G)$ is planar.

A digraph  is  {\em even} if for every
0-1-weighting of the edges it contains a directed cycle of even total weight.

%%%%%%%%%%%%%%%%%%%%%%%%%%%%%%%%%%%%%%%%%%%%%%%%%%%%%%%%%%%%%%%%%%%%%%%%%%
\subsection{Acyclic coloring of directed graphs}
%%%%%%%%%%%%%%%%%%%%%%%%%%%%%%%%%%%%%%%%%%%%%%%%%%%%%%%%%%%%%%%%%%%%%%%%%%

We consider the approach for coloring digraphs given in \cite{NL82}.
A set $V'$ of vertices of a digraph $G$ is called {\em acyclic} if $G[V']$ is acyclic.

\begin{definition}[Acyclic graph coloring \cite{NL82}]\label{def-dc}
An \emph{acyclic $r$-coloring} of a digraph $G=(V,E)$ is a mapping $c:V\to \{1,\ldots,r\}$,
such that the color classes  $c^{-1}(i)$ for $1\leq i\leq r$ are acyclic.
The {\em dichromatic number} of $G$, denoted by $\vec{\chi}(G)$, is the smallest $r$,
such that $G$ has an acyclic $r$-coloring.
\end{definition}

There are several works on acyclic graph coloring  \cite{BFJKM04,Moh03,NL82} including several
recent works~\cite{LM17,MSW19,SW20}. The following observations support that the dichromatic number
can be regarded as a natural counterpart of the well known
chromatic number $\chi(G)$ for undirected graphs $G$.

\begin{observation}\label{obs-dicol}
For every symmetric directed graph $G$ it holds that
$\vec{\chi}(G) = \chi(\un(G))$.
\end{observation}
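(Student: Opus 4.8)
The plan is to reduce the statement to a single structural equivalence and then translate it directly into a comparison of the two partition numbers. Write $G=\overleftrightarrow{G'}$, where $G'=\un(G)$, since $G$ is symmetric. The key claim I would isolate and prove first is: \emph{for a symmetric digraph $G$ and any $V'\subseteq V$, the set $V'$ is acyclic in $G$ if and only if $V'$ is an independent set in $\un(G)$.} Everything else follows mechanically from this.

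To establish the claim I would argue both directions. For the forward contrapositive, suppose $V'$ is \emph{not} independent in $\un(G)$, so there exist distinct $u,v\in V'$ with $\{u,v\}$ an edge of $\un(G)$; because $G$ is symmetric, both arcs $(u,v)$ and $(v,u)$ lie in $E$, and $\{u,v\}$ together with these two arcs is a $\overrightarrow{C_2}$ contained in $G[V']$. Hence $G[V']$ is not acyclic, i.e.\ $V'$ is not acyclic. For the converse, suppose $V'$ is independent in $\un(G)$; then no two vertices of $V'$ are joined by an edge in $\un(G)$, so by definition of $\un(G)$ there is no arc of $E$ with both endpoints in $V'$, and $G[V']$ has empty arc set. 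A digraph with no arcs contains no $\overrightarrow{C_n}$ for any $n\ge 2$, so $G[V']$ is a DAG and $V'$ is acyclic. This proves the equivalence.

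Given the claim, a mapping $c:V\to\{1,\dots,r\}$ has the property that every color class $c^{-1}(i)$ is acyclic in $G$ if and only if every class $c^{-1}(i)$ is independent in $\un(G)$; that is, $c$ is an acyclic $r$-coloring of $G$ exactly when $c$ is a proper $r$-coloring of $\un(G)$. The two coloring notions therefore have identical feasible colorings for each $r$, so the smallest admissible $r$ coincides, giving $\vec{\chi}(G)=\chi(\un(G))$. I do not anticipate a genuine obstacle here; the only point deserving care is the boundary case $n=2$, namely recognizing that a pair of symmetric arcs between adjacent vertices is precisely a directed cycle $\overrightarrow{C_2}$ under the definition fixed in the preliminaries, which is what forces adjacency to destroy acyclicity.
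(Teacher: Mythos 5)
Your proof is correct: the equivalence ``$V'$ acyclic in $G$ iff $V'$ independent in $\un(G)$'' (via the $\overrightarrow{C_2}$ formed by a symmetric arc pair) is exactly the reasoning that makes this statement immediate, and the paper itself states it as an observation without any written proof. Your argument is the intended one, spelled out in full, including the one point that genuinely needs the paper's conventions, namely that a digon counts as a directed cycle $\overrightarrow{C_n}$ with $n=2$.
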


\begin{observation} For every directed graph $G$ it holds that
$\vec{\chi}(G)\leq \chi(\un(G))$.
\end{observation}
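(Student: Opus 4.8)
The plan is to show that an optimal proper coloring of $\un(G)$ is already an acyclic coloring of $G$, so that it directly witnesses the bound $\vec{\chi}(G) \le \chi(\un(G))$.

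First I would fix a proper coloring $c : V \to \{1,\ldots,r\}$ of $\un(G)$ using $r = \chi(\un(G))$ colors, so that every color class $c^{-1}(i)$, $1 \le i \le r$, is an independent set of $\un(G)$. Then I would verify that each such color class is acyclic in $G$. By the definition of the underlying undirected graph, two distinct vertices $u,v$ are adjacent in $\un(G)$ precisely when $(u,v)\in E$ or $(v,u)\in E$. Consequently, if $c^{-1}(i)$ is independent in $\un(G)$, then the induced subdigraph $G[c^{-1}(i)]$ contains no arc whatsoever, and an arc-free digraph is trivially acyclic. Hence $c$ satisfies Definition~\ref{def-dc} and is an acyclic $r$-coloring of $G$, which yields $\vec{\chi}(G) \le r = \chi(\un(G))$.

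I do not expect any real obstacle here: the statement follows immediately from the definitions. The only point that must be checked with care is that independence in $\un(G)$ rules out arcs in \emph{both} directions between two vertices, which is exactly how $\un(G)$ is defined (an undirected edge $\{u,v\}$ is present as soon as one of the arcs $(u,v),(v,u)$ is present). Thus no directed cycle on a color class can survive.

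Alternatively, the bound can be derived from Observation~\ref{obs-dicol}. Since $G$ is a spanning subdigraph of the complete biorientation $\overleftrightarrow{\un(G)}$, and since deleting arcs can never create a directed cycle within a color class, the dichromatic number is monotone under taking subdigraphs on the same vertex set. Therefore $\vec{\chi}(G) \le \vec{\chi}(\overleftrightarrow{\un(G)})$, and as $\overleftrightarrow{\un(G)}$ is symmetric with $\un(\overleftrightarrow{\un(G)}) = \un(G)$, Observation~\ref{obs-dicol} gives $\vec{\chi}(\overleftrightarrow{\un(G)}) = \chi(\un(G))$, completing the bound.
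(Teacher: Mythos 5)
Your proof is correct and takes the approach the paper intends: the paper states this observation without any proof, as an immediate consequence of the definitions, and your direct argument---that each color class of a proper coloring of $\un(G)$ is arc-free in $G$ (since an edge of $\un(G)$ is present whenever either arc is) and hence acyclic---is exactly the canonical justification. Your alternative route via subdigraph monotonicity (Observation~\ref{le-ubdigraph}) and Observation~\ref{obs-dicol} applied to $\overleftrightarrow{\un(G)}$ is equally valid.
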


\begin{observation}\label{le-ubdigraph}
Let $G$ be a digraph  and $H$ be a subdigraph
of $G$, then $\vec{\chi}(H)\leq \vec{\chi}(G)$.
\end{observation}

We consider the following problem.

\begin{desctight}
\item[Name] Dichromatic Number ($\DCN$)

\item[Instance]  A digraph $G=(V,E)$ and a positive integer $r \leq |V|$.

\item[Question]  Is there an acyclic $r$-coloring for $G$?
\end{desctight}

If $r$ is constant and not part of the input, the corresponding problem
is denoted by $r$-Dichromatic Number ($\DCN_{r}$). Even  $\DCN_{2}$ is NP-complete  \cite{FHM03}.

%%%%%%%%%%%%%%%%%%%%%%%%%%%%%%%%%%%%%%%%%%%%%%%%%%%%%%%%%%%%%%%%%%%%%%%%%%
\section{Acyclic coloring of special digraph classes}
%%%%%%%%%%%%%%%%%%%%%%%%%%%%%%%%%%%%%%%%%%%%%%%%%%%%%%%%%%%%%%%%%%%%%%%%%%

As recently mentioned in \cite{SW19}, only few classes of digraphs for which the dichromatic number
can be found in polynomial time are known. 

\begin{observation}
The set of DAGs is equal to the set of digraphs
of  dichromatic  number  $1$. 
\end{observation}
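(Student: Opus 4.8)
The plan is to prove the claimed set equality by establishing the biconditional that a digraph $G=(V,E)$ is a DAG if and only if $\vec{\chi}(G)=1$. Both directions unwind directly from Definition~\ref{def-dc}, so the whole argument is a definition-chasing exercise; the only point requiring a moment's care is that $\vec{\chi}(G)=1$ is an \emph{equality}, not merely an upper bound, which needs the trivial remark that $\vec{\chi}(G)\geq 1$ for every (nonempty) digraph.

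First I would treat the forward inclusion. If $G$ is a DAG, then $G$ contains no $\overrightarrow{C_n}$ for $n\geq 2$, so the entire vertex set $V$ is an acyclic set. Hence the constant mapping $c:V\to\{1\}$ is a valid acyclic $1$-coloring, since its single color class $c^{-1}(1)=V$ induces the acyclic subdigraph $G[V]=G$. This yields $\vec{\chi}(G)\leq 1$, and combined with $\vec{\chi}(G)\geq 1$ we obtain $\vec{\chi}(G)=1$.

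Next I would treat the converse. Suppose $\vec{\chi}(G)=1$. Then by definition there exists an acyclic $1$-coloring $c:V\to\{1\}$, whose unique color class $c^{-1}(1)=V$ is acyclic. By the definition of an acyclic set this means $G[V]=G$ is acyclic, i.e.\ $G$ contains no directed cycle $\overrightarrow{C_n}$ with $n\geq 2$, which is precisely the statement that $G$ is a DAG.

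I do not expect any genuine obstacle here: the result is an immediate consequence of the definitions of a DAG and of the dichromatic number, with the single color class of a $1$-coloring being forced to equal the whole vertex set. The only thing I would be sure to state explicitly is the trivial lower bound $\vec{\chi}(G)\geq 1$, so that the two inclusions combine into the asserted equality of the two digraph classes.
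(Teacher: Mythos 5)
Your proof is correct and is exactly the definition-chasing argument the paper has in mind: the paper states this as an Observation without proof, treating it as immediate from Definition~\ref{def-dc}, and your two directions (the whole vertex set is one acyclic color class iff $G$ is acyclic) spell out precisely that implicit reasoning. Your explicit note on the trivial lower bound $\vec{\chi}(G)\geq 1$ for nonempty digraphs is a fine touch but does not change the approach.
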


\begin{proposition}[\cite{NL82}]
 Every odd-cycle free digraph   has  dichromatic number at most $2$.
\end{proposition}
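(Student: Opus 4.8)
The plan is to reduce the statement to strongly connected digraphs and then to exhibit an explicit $2$-colouring. First I would record the structural observation that every directed cycle of $G$ lies inside a single strong component, since all vertices of a directed cycle are mutually reachable. Consequently, if we colour the vertices of each strong component with the common palette $\{1,2\}$ so that every colour class induces an acyclic subdigraph, the resulting global colouring again has acyclic colour classes: a monochromatic $\overrightarrow{C_n}$ would be confined to one component, contradicting the choice of colouring there. Since every induced subdigraph of an odd-cycle-free digraph is again odd-cycle-free, it suffices to prove $\vec{\chi}(H)\le 2$ for every strongly connected odd-cycle-free digraph $H$; trivial (single-vertex) components contribute dichromatic number $1$.

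So let $H=(V,E)$ be strongly connected and odd-cycle-free. Because the shortest possible directed cycle has length $2$, being odd-cycle-free is equivalent to the statement that \emph{every} directed cycle of $H$ has even length. The key lemma I would establish is that every closed directed walk in $H$ has even length: the arc multiset of a closed walk forms a sub-multigraph in which in-degree equals out-degree at every vertex, and such a multigraph decomposes into arc-disjoint directed cycles, so the length of the walk is a sum of cycle lengths, each of which is even. Next I would fix a root $r\in V$ and define $c(u)\in\{0,1\}$ to be the parity of the length of an arbitrary directed walk from $r$ to $u$, which exists by strong connectivity. To see that $c$ is well defined, take two $r$--$u$ walks of lengths $\ell_1,\ell_2$; strong connectivity provides a $u$--$r$ walk of some length $m$, and concatenation yields two closed walks of lengths $\ell_1+m$ and $\ell_2+m$, both even by the lemma, whence $\ell_1\equiv\ell_2\pmod 2$.

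It then remains to check that $c$ is a proper $2$-colouring in the required sense. For any arc $(u,v)\in E$, appending it to an $r$--$u$ walk gives an $r$--$v$ walk that is one longer, so $c(u)\neq c(v)$; hence each colour class $c^{-1}(0)$ and $c^{-1}(1)$ contains no arc whatsoever and is in particular acyclic, giving $\vec{\chi}(H)\le 2$ and completing the reduction. The main obstacle is the well-definedness of the parity function, which rests on the cycle-decomposition lemma together with the use of strong connectivity to supply the return walks; the remaining steps are routine bookkeeping. I would also note in passing that the colouring produced actually partitions $V$ into two arc-free sets, a conclusion stronger than mere acyclicity.
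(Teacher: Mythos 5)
Your proof is correct, and it is worth noting that the paper itself gives no proof of this proposition at all: it is imported verbatim from Neumann-Lara \cite{NL82}, so there is no internal argument to compare against. Your argument is the classical self-contained one, and it is sound in all its steps: every directed cycle lies inside a single strong component, so acyclic $2$-colorings of the components glue to a global acyclic $2$-coloring; for a strongly connected odd-cycle-free $H$ the decomposition of the arc multiset of any closed walk into arc-disjoint directed cycles (each a genuine cycle of $H$, hence even, where digons of length $2$ are permitted and harmless) shows every closed walk has even length; and this makes the root-to-vertex walk-parity function well defined, with every arc joining the two parity classes. In fact you prove the stronger classical statement that a strongly connected digraph with no odd directed cycle has a bipartite underlying graph, i.e.\ $\un(H)$ is bipartite, which is strictly more than acyclicity of the classes. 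Two small caveats: first, your closing remark that the coloring ``partitions $V$ into two arc-free sets'' is true only for the strongly connected piece $H$; after gluing across strong components, arcs between different components may well be monochromatic, so the global color classes are acyclic but in general not independent. Second, it would be worth one explicit sentence that each cycle in the multiset decomposition uses pairwise distinct vertices and hence is a subdigraph cycle of $H$, since the paper's definition of odd-cycle-freeness forbids odd $\overrightarrow{C_n}$ only as subdigraphs. As a bonus, your proof is effectively algorithmic (strong components plus a parity labelling, both linear time), which is exactly what the paper needs when it later asserts that the dichromatic number of odd-cycle-free digraphs is computable in linear time.
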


\begin{proposition}[\cite{MSW19}]
 Every  non-even digraph  has  dichromatic number at most $2$.
\end{proposition}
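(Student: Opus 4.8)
The plan is to reduce to strongly connected digraphs and then to exploit the weighting that witnesses non-evenness. Since every directed cycle of a digraph is contained in a single strongly connected component, a vertex set is acyclic if and only if its trace on each component induces an acyclic subdigraph, and the colorings of distinct components may be chosen independently; hence $\vec{\chi}(G)=\max_{S}\vec{\chi}(G[S])$ over the strongly connected components $S$ of $G$, so it suffices to treat a single strongly connected, non-even digraph $G$. By definition of non-evenness, I fix a weighting $w\colon E\to\{0,1\}$ under which every directed cycle of $G$ has odd total weight.

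First I would record the elementary consequences of $w$. As every directed cycle carries an odd, in particular positive, number of weight-$1$ arcs, the spanning subdigraph formed by the weight-$0$ arcs is acyclic, so it admits a topological order. The temptation is to $2$-color the vertices so that each weight-$1$ arc becomes bichromatic and each weight-$0$ arc monochromatic, which would make a monochromatic directed cycle impossible. However, around any directed cycle the number of color changes is necessarily even, while the number of weight-$1$ arcs is odd, so this coloring cannot exist: the oddness of the cycles is exactly the obstruction. The same frustration defeats the natural variants, such as coloring by the parity of a weighted distance or descending a coloring from the parity double cover, which I would check fail already on the single directed triangle $\overrightarrow{C_3}$ weighted by all ones. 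Consequently the oddness must be used globally, not arc by arc.

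The core of the argument, and the step I expect to be hardest, is to turn this global parity information into a vertex partition into two acyclic sets. The route I would pursue is the structural characterization of non-even digraphs: non-evenness is governed by the even-dicycle problem, whose resolution via Pfaffian orientations yields a decomposition of strongly connected non-even digraphs into tractable building blocks. The plan is to exhibit an acyclic $2$-coloring of each block and to glue them along the decomposition, using strong connectivity and the weighting $w$ to match the colorings across the interfaces so that no monochromatic directed cycle survives. The small cases, namely directed cycles $\overrightarrow{C_n}$ and digons, where deleting a single vertex from each cycle already yields two acyclic classes, are immediate and indicate the intended pattern; the genuine difficulty is the gluing, and it is here that the structure theorem for non-even digraphs is indispensable. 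By contrast, a naive induction on the number of vertices only gives $\vec{\chi}(G)\le 3$, since reinserting a deleted vertex may cost an extra color, which is precisely why the structural input is needed to reach the sharp bound $2$.
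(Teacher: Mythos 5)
This proposition is not actually proved in the paper: it is imported as a black box from \cite{MSW19}, so there is no in-paper argument to compare against and your attempt has to stand on its own. Its preliminary steps are correct but easy: the reduction to strongly connected components is valid, the weight-$0$ arcs of an odd weighting do induce an acyclic spanning subdigraph, and your parity count correctly shows that no $2$-coloring can make every weight-$1$ arc bichromatic and every weight-$0$ arc monochromatic. Everything after that, however, is a plan rather than a proof. The step you yourself identify as the core --- converting the global parity property into a partition into two acyclic sets --- is precisely the content of the theorem, and it is delegated to an unnamed ``structural characterization,'' undefined ``tractable building blocks,'' and an unexecuted ``gluing'' along unspecified ``interfaces.'' You never state the structure theorem you invoke, never say what the blocks are, and never give the matching argument; your own parity obstruction is evidence that such interface-matching is exactly where the difficulty lives, so it cannot be dismissed with ``using strong connectivity and the weighting $w$.''

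Concretely, the available tools here are the Seymour--Thomassen characterization of even digraphs (presence of a weak odd double cycle, essentially a subdivision of a bidirected odd cycle $\overleftrightarrow{C_{2k+1}}$) and the even-dicycle/Pfaffian-orientation structure theory of Robertson--Seymour--Thomas and McCuaig; the latter decomposes braces (strongly $2$-connected bipartite pieces) and does not directly yield a decomposition of an arbitrary strongly connected non-even digraph into pieces of dichromatic number $2$ together with a coherent way to merge their colorings. Turning either tool into an acyclic $2$-coloring is exactly what the long proof in \cite{MSW19} accomplishes, and none of that work appears in your text. A further flaw: your closing claim that a ``naive induction'' yields $\vec{\chi}(G)\le 3$ does not hold as stated. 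If the induction hypothesis is the bound $2$, then losing a color upon reinserting the deleted vertex means the induction never closes (you would be assuming for smaller digraphs the very bound you fail to reproduce); if the hypothesis is the bound $3$, reinsertion may cost a fourth color. So even the weaker bound is not established by the argument you sketch, and the proposal as written proves nothing beyond the elementary reductions.
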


Thus, given a DAG its  dichromatic number is always equal to $1$. 
Further, for some given odd-cycle free digraph or non-even digraph
we first check in linear time whether the  digraph is a DAG and thus has 
dichromatic number $1$ or otherwise state that it has dichromatic number $2$.

\begin{corollary}
For DAGs, odd-cycle free digraphs, and   non-even digraphs 
the dichromatic number can be computed in linear time.
\end{corollary}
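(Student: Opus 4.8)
The plan is to observe that the corollary follows by combining the two preceding propositions with the linear-time recognizability of DAGs and the observation (already noted in the paragraph just before the statement) that the dichromatic number of a digraph is $1$ precisely when it is a DAG. So first I would handle each of the three classes by the same two-step procedure: given an input digraph $G$ from the class, run a linear-time DAG test, and branch on the outcome.

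For the DAG test itself I would use a standard linear-time topological-sort / depth-first search procedure: attempt to compute a topological ordering of $G=(V,E)$ in time $\bigo(|V|+|E|)$, which succeeds exactly when $G$ is acyclic and otherwise detects a directed cycle. If the test reports that $G$ is acyclic, then $G$ is a DAG and we output $\vec{\chi}(G)=1$, which is correct by the observation that the set of DAGs equals the set of digraphs of dichromatic number $1$. If the test reports a directed cycle, then $G$ is not a DAG, so $\vec{\chi}(G)\geq 2$; and since $G$ belongs to one of the three classes, the relevant proposition gives $\vec{\chi}(G)\leq 2$. Specifically, for odd-cycle free digraphs we invoke the proposition of \cite{NL82}, and for non-even digraphs the proposition of \cite{MSW19}; for DAGs the second branch simply does not occur. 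In every case we conclude $\vec{\chi}(G)=2$ and output $2$.

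Since the only computation performed is a single linear-time DAG test followed by a constant-time decision, the total running time is $\bigo(|V|+|E|)$, which is linear in the size of the input digraph. This establishes that the dichromatic number can be computed in linear time for all three classes.

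I do not expect a genuine obstacle here, as the result is essentially a packaging of the two upper-bound propositions together with the trivial lower bound $\vec{\chi}(G)\geq 2$ for non-DAGs. The one point requiring care is that the algorithm as stated assumes the input is \emph{promised} to lie in the respective class: the correctness of outputting $2$ in the cyclic branch relies on the class membership hypothesis, and we do not attempt to \emph{verify} membership (e.g.\ testing whether a digraph is non-even, which need not be linear-time). Thus the statement is about computing $\vec{\chi}(G)$ under the assumption $G$ is drawn from the named class, and the proof respects exactly that reading.
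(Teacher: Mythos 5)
Your proposal is correct and follows essentially the same approach as the paper: run a linear-time acyclicity test, output $1$ if the input is a DAG, and otherwise output $2$, justified by the trivial lower bound for non-DAGs together with the upper-bound propositions of \cite{NL82} and \cite{MSW19}. Your additional remark that class membership is a promise rather than something to be verified is a sensible clarification, consistent with how the paper implicitly treats the statement.
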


%\cite{MSW19} zeigen sogar das die Färung in P bestimmt werden kann.
% hier steht nur, dass man die anzahl der farben lin bestimmen kann:
% 
% wenn chi<=2 muss man nur testen ob es ein DAG ist, wenn nicht ist chi=1 sonst chi=2

\subsection{Acyclic coloring of perfect digraphs}

The dichromatic number and the clique number are the two basic concepts
for the class of perfect digraphs~\cite{AH15}.
The {\em clique number} of a digraph $G$, denoted by  $\omega_d(G)$, is
the number of vertices in a largest complete bioriented subdigraph of $G$.

\begin{definition}[Perfect digraphs \cite{AH15}]\label{defp}
A digraph $G$ is {\em perfect} if, for every induced subdigraph $H$ of $G$, the dichromatic
number $\vec{\chi}(H)$ equals the clique number $\omega_d(H)$.
\end{definition}

An undirected graph $G$ is perfect if and only if its complete biorientation  $\overleftrightarrow{G}$
is a perfect digraph. Thus, Definition \ref{defp} is a  generalization of perfectness to digraphs.
While for undirected perfect graphs more than a hundred subclasses have been defined and studied \cite{Hou06},
for perfect digraphs there are only very few subclasses known. Obviously, DAGs and subclasses
such as series-parallel digraphs, minimal series-parallel digraphs and series-parallel order
digraphs \cite{VTL82} are perfect digraphs.

By \cite{AH15} the dichromatic number of a perfect digraph $G$
can be found by the chromatic number of $\un(\sym(G))$,
which is possible in polynomial time \cite{GLS81}.

\begin{proposition}[\cite{AH15}]\label{perf-col}
For every perfect digraph the Dichromatic Number problem can be solved in polynomial time.
\end{proposition}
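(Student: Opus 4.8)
The plan is to reduce the computation of $\vec{\chi}(G)$ for a perfect digraph $G$ to the computation of the ordinary chromatic number of the undirected graph $\un(\sym(G))$, and then to invoke the polynomial-time algorithm of Gr\"otschel, Lov\'asz and Schrijver \cite{GLS81} for the chromatic number of perfect undirected graphs. The whole argument rests on the single identity $\vec{\chi}(G)=\chi(\un(\sym(G)))$, which I would establish by squeezing $\vec{\chi}(G)$ between two copies of $\chi(\un(\sym(G)))$.

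For the lower bound I would observe that $\sym(G)$ is a symmetric spanning subdigraph of $G$. Since $\sym(G)$ is a subdigraph of $G$, Observation \ref{le-ubdigraph} gives $\vec{\chi}(\sym(G))\leq\vec{\chi}(G)$, and since $\sym(G)$ is symmetric, Observation \ref{obs-dicol} gives $\vec{\chi}(\sym(G))=\chi(\un(\sym(G)))$; together these yield $\chi(\un(\sym(G)))\leq\vec{\chi}(G)$. For the matching upper bound I would use perfectness: by Definition \ref{defp} we have $\vec{\chi}(G)=\omega_d(G)$, and a set of vertices induces a complete bioriented subdigraph of $G$ exactly when every pair among them is joined by a symmetric arc, i.e. exactly when the set is a clique of $\un(\sym(G))$; hence $\omega_d(G)=\omega(\un(\sym(G)))$. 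As the clique number never exceeds the chromatic number we get $\vec{\chi}(G)=\omega(\un(\sym(G)))\leq\chi(\un(\sym(G)))$. Chaining the two bounds forces equality throughout, so $\vec{\chi}(G)=\chi(\un(\sym(G)))=\omega(\un(\sym(G)))$.

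To make the reduction algorithmic I still need $\un(\sym(G))$ to be a perfect \emph{undirected} graph, so that \cite{GLS81} applies. This follows from the same chain applied to induced subdigraphs: every induced subgraph of $\un(\sym(G))$ has the form $\un(\sym(G[S]))$ for some $S\subseteq V$, and $G[S]$ is again a perfect digraph (the defining property of Definition \ref{defp} is hereditary, since an induced subdigraph of $G[S]$ is an induced subdigraph of $G$). Running the argument above on $G[S]$ gives $\chi(\un(\sym(G[S])))=\omega(\un(\sym(G[S])))$, i.e. clique number equals chromatic number on every induced subgraph of $\un(\sym(G))$, which is exactly perfectness of $\un(\sym(G))$. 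Thus I would build $\un(\sym(G))$ in linear time, compute $\chi(\un(\sym(G)))$ in polynomial time via \cite{GLS81}, and answer the decision problem $\DCN$ by comparing this value with the given bound $r$.

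The main obstacle is conceptual rather than computational: one must notice that the relevant undirected graph is $\un(\sym(G))$, the underlying graph of the \emph{symmetric} part rather than of $G$ itself, and verify the two facts $\omega_d(G)=\omega(\un(\sym(G)))$ and, via the squeeze, that perfectness of $G$ descends to perfectness of $\un(\sym(G))$. Once this is in place the polynomial running time is entirely outsourced to \cite{GLS81}.
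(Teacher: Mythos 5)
Your proof is correct and follows essentially the same route as the paper: the paper simply cites \cite{AH15} for the identity $\vec{\chi}(G)=\chi(\un(\sym(G)))$ and then invokes the polynomial-time algorithm of \cite{GLS81} for perfect undirected graphs, which is exactly the reduction you carry out. Your squeeze argument (lower bound via Observations \ref{obs-dicol} and \ref{le-ubdigraph}, upper bound via $\vec{\chi}(G)=\omega_d(G)=\omega(\un(\sym(G)))$) and the hereditary argument showing that $\un(\sym(G))$ is itself perfect correctly reconstruct the details that the paper outsources to the citation.
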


We show how to find an optimal
acyclic coloring for directed co-graphs and special digraphs of directed 
tree-width one in linear time.

\subsection{Acyclic coloring of directed co-graphs}

Let $G_1=(V_1,E_1)$ and $G_2=(V_2,E_2)$ be two vertex-disjoint digraphs.
The following opera\-tions
have been considered  by Bechet et al.\ in \cite{BGR97}.
\begin{itemize}
\item
The {\em disjoint union} of $G_1$ and $G_2$,
denoted by $G_1 \oplus  G_2$,
is the digraph with vertex set $V_1\cup  V_2$ and
arc set $E_1\cup  E_2$.

\item
The {\em series composition} of $G_1$ and $G_2$,
denoted by $G_1\otimes G_2$,
is defined by their disjoint union plus all possible arcs between
vertices of $G_1$ and $G_2$.

\item
The {\em order composition} of $G_1$ and $G_2$,
denoted by $G_1\oslash  G_2$,
is defined by their disjoint union plus all possible arcs from
vertices of $G_1$ to vertices of $G_2$.
\end{itemize}

The following transformation has  been considered  by Johnson et al.\ in \cite{JRST01}
and generalizes the operations disjoint union and order composition.
\begin{itemize}
\item
A graph $G$ is obtained by a {\em directed union} of
$G_1$ and $G_2$, denoted by  $G_1 \ominus G_2$ if  $G$ is a
 subdigraph of the order composition of $G_1 \oslash  G_2$
 and contains the disjoint union $G_1 \oplus  G_2$ as a subdigraph.
\end{itemize}

We recall the definition of directed co-graphs from \cite{CP06}.

\begin{definition}[Directed co-graphs \cite{CP06}]\label{dcog}
The class of {\em directed co-graphs} is recursively defined as follows.
\begin{enumerate}
\item Every digraph with a single vertex $(\{v\},\emptyset)$,
denoted by $v$, is a {\em directed co-graph}.

\item If  $G_1$ and $G_2$  are vertex-disjoint directed co-graphs, then
\begin{enumerate}
\item
the disjoint union
$G_1\oplus G_2$,

\item
the series composition
$G_1 \otimes G_2$, and
\item
the order composition
$G_1\oslash G_2$  are {\em directed co-graphs}.
\end{enumerate}
\end{enumerate}
%The class of directed co-graphs is denoted by $\DC$.
\end{definition}

Every expression $X$ using  the four operations of Definition  \ref{dcog}
is called a {\em di-co-expression} and
$\g(X)$ is the defined digraph.

As undirected co-graphs can be characterized by forbidding the $P_4$,
directed co-graphs can be characterized likewise by
excluding the eight forbidden induced subdigraphs \cite{CP06}.

For every directed co-graph we can define a tree structure
denoted as {\em di-co-tree}. It is an ordered rooted tree whose
leaves represent the vertices of the digraph and
whose inner nodes correspond
to the operations applied on the subexpressions defined by the subtrees.
For every directed co-graph one can construct a di-co-tree in linear time \cite{CP06}.
Directed co-graphs are interesting from an algorithmic point of view
since several hard graph problems can be solved in
polynomial time  by dynamic programming along the tree structure of
the input graph, see \cite{BM14,Gur17a,GHKRRW20,GKR19f,GKR19d,GR18c,Ret98}.  %,
Moreover, directed co-graphs are very useful for the reconstruction
of the evolutionary history of genes or species using genomic
sequence data \cite{HSW17,NEMWH18}.

The set of {\em extended directed co-graphs}
was introduced in \cite{GR18f} by allowing
the directed union and series composition of defined digraphs which leads
to a superclass of directed co-graphs.
Also for every extended directed co-graph we can define a tree structure,
denoted as {\em ex-di-co-tree}.
For the class of extended directed co-graphs it remains open how to compute an
ex-di-co-tree.

\begin{lemma}\label{le-dec} Let $G_1$ and $G_2$ be two vertex-disjoint directed graphs.
Then, the following equations hold:
\begin{enumerate}
\item $\vec{\chi}((\{v\},\emptyset))=1$

\item $\vec{\chi}(G_1 \oplus G_2) = \vec{\chi}(G_1\oslash  G_2) = \vec{\chi}(G_1\ominus  G_2) = \max(\vec{\chi} (G_1), \vec{\chi}(G_2))$

\item $\vec{\chi}(G_1\otimes  G_2)= \vec{\chi}(G_1) +  \vec{\chi}(G_2)$
\end{enumerate}
\end{lemma}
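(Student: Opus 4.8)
The plan is to prove each of the three equations by exhibiting acyclic colorings that realize the claimed values and arguing optimality. The single-vertex case is immediate: a one-vertex digraph has no arcs, hence no directed cycle, so it is acyclic and $\vec{\chi}((\{v\},\emptyset))=1$.

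For the second equation, the key observation is that in each of $G_1\oplus G_2$, $G_1\oslash G_2$, and $G_1\ominus G_2$, every arc between $V_1$ and $V_2$ (if any) runs in a single direction, from $V_1$ to $V_2$. Consequently any directed cycle must lie entirely within $G_1$ or entirely within $G_2$, since a cycle crossing the cut would need an arc back from $V_2$ to $V_1$, which none of these operations provides. I would first color $G_1$ optimally with $\vec{\chi}(G_1)$ colors and $G_2$ optimally with $\vec{\chi}(G_2)$ colors, reusing the same palette of size $\max(\vec{\chi}(G_1),\vec{\chi}(G_2))$; each color class is a union of an acyclic set from $G_1$ and an acyclic set from $G_2$ with no cycle bridging them, hence acyclic. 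This shows $\vec{\chi}\le\max(\vec{\chi}(G_1),\vec{\chi}(G_2))$. For the lower bound, both $G_1$ and $G_2$ are induced subdigraphs of the composed graph, so by Observation \ref{le-ubdigraph} the dichromatic number is at least each of $\vec{\chi}(G_1)$ and $\vec{\chi}(G_2)$, giving the matching lower bound. (For $G_1\ominus G_2$ I would note that $G_1$ and $G_2$ remain induced subdigraphs because the directed union contains the disjoint union, and that the cross arcs still go only from $V_1$ to $V_2$.)

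For the third equation, the series composition adds all arcs in both directions between $V_1$ and $V_2$. The upper bound follows by taking disjoint palettes: color $G_1$ with colors $\{1,\dots,\vec{\chi}(G_1)\}$ and $G_2$ with colors $\{\vec{\chi}(G_1)+1,\dots,\vec{\chi}(G_1)+\vec{\chi}(G_2)\}$. Then each color class lies wholly inside $G_1$ or wholly inside $G_2$ and is therefore acyclic, so $\vec{\chi}(G_1\otimes G_2)\le\vec{\chi}(G_1)+\vec{\chi}(G_2)$. The main obstacle, and the only genuinely interesting step, is the lower bound: I must argue that no color class can contain vertices from both $V_1$ and $V_2$. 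The idea is that if some color class $C$ contained a vertex $u\in V_1$ and a vertex $w\in V_2$, then the symmetric pair of arcs $(u,w),(w,u)$ forms a directed $\overrightarrow{C_2}$ inside $G[C]$, contradicting acyclicity of $C$. Hence every color class is confined to one side, so the colors used on $V_1$ are disjoint from those used on $V_2$; restricting an optimal coloring of $G_1\otimes G_2$ to $V_1$ and to $V_2$ yields valid acyclic colorings of $G_1$ and $G_2$ using disjoint color sets, whence $\vec{\chi}(G_1\otimes G_2)\ge\vec{\chi}(G_1)+\vec{\chi}(G_2)$.

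Combining the upper and lower bounds in each case gives the three equalities. The essential structural fact driving the whole proof is the directionality of the cross arcs: one-directional (or absent) for $\oplus,\oslash,\ominus$, forcing cycles to stay on one side and allowing palette reuse, versus bidirectional for $\otimes$, forcing every bichromatic pair to create a $\overrightarrow{C_2}$ and thereby prohibiting any color sharing across the two parts.
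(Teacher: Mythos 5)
Your proof is correct and takes essentially the same route as the paper: upper bounds via explicit colorings (a shared palette of size $\max(\vec{\chi}(G_1),\vec{\chi}(G_2))$ for $\oplus,\oslash,\ominus$, disjoint palettes for $\otimes$), and lower bounds via the induced-subdigraph observation together with, for $\otimes$, the $\overrightarrow{C_2}$ argument showing no color class can meet both $V_1$ and $V_2$. If anything, your handling of $\oslash$ and $\ominus$ --- explicitly noting that all cross arcs run from $V_1$ to $V_2$, so no directed cycle can cross the cut --- is more careful than the paper's, which simply states that these cases ``follow by the same arguments'' as the disjoint union.
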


\begin{proof}
\begin{enumerate}
\item
$\vec{\chi}((\{v\},\emptyset)) = 1$ is obviously clear.

%%%%%%%%%%%%%%%%%%%%%%%%%%%%

\item  $\vec{\chi}(G_1 \oplus G_2) \geq \max(\vec{\chi} (G_1), \vec{\chi}(G_2))$

Since the digraphs $G_1$ and  $G_2$ are induced subdigraphs of digraph $G_1 \oplus  G_2$,
both values $\vec{\chi} (G_1)$ and $\vec{\chi}(G_2)$ lead to
a lower bound for the number of necessary colors
of the combined graph by Observation~\ref{le-ubdigraph}.

$\vec{\chi}(G_1 \oplus  G_2) \leq \max(\vec{\chi} (G_1), \vec{\chi}(G_2))$

Since the disjoint union operation does not create any new arcs, we
can combine color classes of $G_1$ and $G_2$.

The results for the order composition and directed union follow by the same arguments.

%%%%%%%%%%%%%%%%%%%%%%%%%%%%

\item $\vec{\chi}(G_1\otimes  G_2) \geq \vec{\chi}(G_1) +  \vec{\chi}(G_2)$

Since every $G_1$ and $G_2$ are an induced subdigraphs of the combined
graph, both values $\vec{\chi} (G_1)$ and $\vec{\chi}(G_2)$ lead to
a lower bound for the number of necessary colors
of the combined graph by Observation~\ref{le-ubdigraph}. Further,
the series operations implies that
every vertex in $G_1$ is on a cycle of length two with every vertex
of $G_2$. Thus, no vertex in $G_1$ can be
colored in the same way as a vertex in $G_2$.
Thus, $\vec{\chi}(G_1) + \vec{\chi}(G_2)$
leads to a lower bound for the number of necessary colors
of the combined graph.

$\vec{\chi}(G_1\otimes  G_2) \leq \vec{\chi}(G_1) +\vec{\chi}(G_2)$

For $1\leq i \leq 2$ let $G_i=(V_i,E_i)$ and $c_i:V_i\to\{1,\ldots,\vec{\chi}(G_i)\}$
a coloring for $G_i$.
For $G_1\otimes    G_2=(V,E)$ we define a
mapping $c:V\to \{1,\ldots, \vec{\chi}(G_1)+\vec{\chi}(G_2) \}$ as follows.
\[
c(v)=\left\{ \begin{array}{ll}
c_1(v)                        &  {\rm\ if\ } v \in V_1 \\

c_2(v)+ \vec{\chi}(G_1)   &  {\rm\ if\ } v \in V_{2}. \\
\end{array}\right.
\]

The mapping $c$ satisfies the definition of an acyclic coloring, because
every color class $c^{-1}(j)$, $j \in\{1,\ldots, \vec{\chi}(G_1)+\vec{\chi}(G_2)  \}$ is a subset of
$V_1$ or of $V_2$, such that $c^{-1}(j)$
induces an acyclic digraph in $G_1$ or $G_2$ by assumption.
Since the series operation does not insert any further arcs between two vertices
of $G_1$ or $G_2$, vertex set $c^{-1}(j)$ induces also an  acyclic digraph in $G$.
\end{enumerate}
This shows the statements of the lemma.
\end{proof}

Lemma \ref{le-dec} can be used to obtain the following result.

\begin{theorem}\label{algopd}
Let $G$ be a (-n extended) directed co-graph (given by an ex-di-co-tree).
Then, an optimal acyclic coloring for $G$ and
$\vec{\chi}(G)$  can be computed in linear time.
\end{theorem}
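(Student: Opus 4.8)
The plan is to give a recursive linear-time algorithm that traverses the ex-di-co-tree of $G$ in a bottom-up manner, computing $\vec{\chi}$ at every node together with an associated optimal acyclic coloring. The correctness of the recursion is entirely supplied by Lemma~\ref{le-dec}, so the work of the proof is organizational: I need to argue that the recurrences can be evaluated within a total of linear time, and that an actual coloring (not merely the value $\vec{\chi}(G)$) can be produced and merged within that budget.

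First I would recall that for a (possibly extended) directed co-graph $G$ we are given an ex-di-co-tree $T$, whose leaves are the vertices of $G$ and whose inner nodes are labeled by the operations $\oplus$, $\otimes$, $\oslash$ (and, in the extended case, $\ominus$). The number of leaves is $n=|V|$ and the number of inner nodes is $O(n)$, so $T$ has size $O(n)$. I would process $T$ from the leaves to the root. At a leaf $v$ I set $\vec{\chi}=1$ and assign $v$ the single color $1$, which is correct by Lemma~\ref{le-dec}(1). At an inner node with children corresponding to $G_1$ and $G_2$, I combine the already-computed values using Lemma~\ref{le-dec}: for $\oplus$, $\oslash$, and $\ominus$ I take $\max(\vec{\chi}(G_1),\vec{\chi}(G_2))$, and for $\otimes$ I take $\vec{\chi}(G_1)+\vec{\chi}(G_2)$. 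This yields $\vec{\chi}(G)$ at the root, and each arithmetic step is $O(1)$, so the value alone is computed in time $O(n)$.

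The main obstacle I anticipate is bookkeeping the coloring itself within linear total time, since the naive reading of the proof of Lemma~\ref{le-dec} relabels or recolors color classes at every merge, and a direct implementation could cost $\Theta(n)$ per inner node and hence $\Theta(n^2)$ overall. To avoid this, I would not store an explicit color for each vertex at every intermediate node; instead I would keep at each subtree only its value $\vec{\chi}(G_i)$ and defer the actual color assignment to a single top-down pass after the root value is known. Concretely, for the $\oplus/\oslash/\ominus$ operations the color classes of the smaller-indexed child can be reused unchanged and those of the larger one folded in without any shift, whereas for the $\otimes$ operation the colors of the second child are shifted by the offset $\vec{\chi}(G_1)$, exactly as in the proof of Lemma~\ref{le-dec}(3). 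By propagating a cumulative offset down the tree in the second pass and assigning each leaf its final color when it is reached, every vertex receives its color exactly once, giving $O(n)$ time for the coloring as well.

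Finally I would verify acyclicity and optimality of the produced coloring. Optimality is immediate, since Lemma~\ref{le-dec} shows that at every node the computed value equals $\vec{\chi}$ of the corresponding subdigraph; by structural induction the value at the root equals $\vec{\chi}(G)$, and the number of colors used by the constructed mapping equals this value. Acyclicity of each color class follows by the same induction: the base case is a single vertex, and the inductive step is exactly the argument in the proof of Lemma~\ref{le-dec}, where for every operation each color class is contained in $V_1$ or in $V_2$ and the relevant operation introduces no new arc inside such a class. Combining the two linear-time passes over the $O(n)$-size tree $T$ gives the claimed linear running time, which completes the proof.
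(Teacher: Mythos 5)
Your proposal is correct and follows essentially the same route as the paper, which simply invokes Lemma~\ref{le-dec} and the implicit bottom-up dynamic programming along the (ex-)di-co-tree. Your additional two-pass scheme (computing values bottom-up, then assigning colors top-down via cumulative offsets) is a sound way to make the linear-time bound for producing the actual coloring explicit, a detail the paper leaves to the reader.
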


In order to state the next result, let $\omega(G)$
be the number of vertices in a largest clique in graph $G$.
Since the results of Lemma \ref{le-dec} also hold for  $\omega_d$ instead of $\vec{\chi}$
we obtain the following result.

\begin{proposition}\label{prop-p}
Let $G$ be a  (-n extended)  directed co-graph (given by an ex-di-co-tree). Then, it holds that
$\vec{\chi}(G)=\chi(\un(\sym(G)))=\omega(\un(\sym(G)))=\omega_d(G)$ and all values can be
computed in linear time.
\end{proposition}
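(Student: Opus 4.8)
The plan is to prove the chain of equalities in Proposition~\ref{prop-p} by a structural induction along the ex-di-co-tree, mirroring exactly the case analysis of Lemma~\ref{le-dec}. The key observation is that Lemma~\ref{le-dec} already established the recursive behaviour of $\vec{\chi}$ under the four operations $\oplus$, $\oslash$, $\ominus$, and $\otimes$; the paper remarks that the identical recurrences hold for $\omega_d$ in place of $\vec{\chi}$, so I would first verify this claim and then show that $\chi(\un(\sym(G)))$ and $\omega(\un(\sym(G)))$ satisfy the same recurrences. Once all four quantities obey the same base case and the same recursion, they must coincide on every (extended) directed co-graph.

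First I would check the base case: for a single vertex $(\{v\},\emptyset)$ we have $\vec{\chi}=\omega_d=1$, and $\un(\sym(G))$ is a single vertex with $\chi=\omega=1$, so all four values equal $1$. Next I would treat the recursion for the operations that do not add symmetric arcs, namely $\oplus$, $\oslash$, and $\ominus$. For these, $\sym(G)$ is the disjoint union of $\sym(G_1)$ and $\sym(G_2)$, since none of these operations creates a pair of antiparallel arcs between $G_1$ and $G_2$ (the order composition and directed union add only arcs in one direction). Hence $\un(\sym(G))$ is the disjoint union of $\un(\sym(G_1))$ and $\un(\sym(G_2))$, and both $\chi$ and $\omega$ of a disjoint union equal the maximum of the two parts; this matches the $\max$ recurrence from Lemma~\ref{le-dec}(2). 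For the series composition $\otimes$, every arc between $G_1$ and $G_2$ is symmetric (both $(u,v)$ and $(v,u)$ are inserted), so $\sym(G)$ contains all these pairs and $\un(\sym(G))$ is the join $\un(\sym(G_1)) + \un(\sym(G_2))$ (complete join of the two underlying symmetric graphs). For the join, $\chi$ adds and $\omega$ adds, matching the additive recurrence of Lemma~\ref{le-dec}(3).

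By induction the four values agree for every (extended) directed co-graph, giving the equalities $\vec{\chi}(G)=\chi(\un(\sym(G)))=\omega(\un(\sym(G)))=\omega_d(G)$. For the linear-time claim, I would reuse the algorithm of Theorem~\ref{algopd}: a single bottom-up pass over the ex-di-co-tree computes $\vec{\chi}(G)$ in linear time, and since all four quantities obey the same recurrence they are computed by the very same pass. Equivalently one observes that $\omega_d(G)=\omega(\un(\sym(G)))$ is a maximum clique in the perfect graph $\un(\sym(G))$, and the recurrence evaluation costs only constant work per tree node.

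The main obstacle is the correct bookkeeping of which operations introduce symmetric arcs, as this is what distinguishes the $\max$ case from the additive case and is precisely what forces $\sym(G)$ to decompose cleanly as either a disjoint union or a complete join at each node. In particular, for the \emph{extended} directed co-graphs the directed union $\ominus$ yields a subdigraph of $G_1 \oslash G_2$ containing $G_1 \oplus G_2$, so one must confirm that no symmetric arc can appear between $G_1$ and $G_2$ under $\ominus$ either; this holds because every cross arc of an order composition runs from $G_1$ to $G_2$, leaving no room for an antiparallel pair. Once this symmetry accounting is secured, the equalities and the perfectness of $\un(\sym(G))$ (which guarantees $\chi=\omega$) close the argument.
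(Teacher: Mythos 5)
Your proof is correct and follows essentially the same route as the paper: the paper likewise argues that the recurrences of Lemma~\ref{le-dec} hold verbatim for $\omega_d$ (and, via the observation that $\un(\sym(G))$ decomposes by disjoint union under $\oplus$, $\oslash$, $\ominus$ and by join under $\otimes$, hence is an undirected co-graph, for $\chi(\un(\sym(G)))$ and $\omega(\un(\sym(G)))$ as well), so all four quantities satisfy the same base case and recursion and are computed by the same linear-time bottom-up pass as in Theorem~\ref{algopd}. Your explicit bookkeeping of which operations create symmetric cross arcs is exactly the detail the paper leaves implicit.
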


\begin{proposition}\label{dicop}
Every (extended) directed co-graph is a perfect digraph.
\end{proposition}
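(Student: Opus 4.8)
The plan is to prove that every (extended) directed co-graph $G$ is perfect by verifying the defining condition of Definition \ref{defp} directly: for every induced subdigraph $H$ of $G$ we must show $\vec{\chi}(H)=\omega_d(H)$. The key structural observation is that the class of (extended) directed co-graphs is closed under taking induced subdigraphs, since deleting a vertex commutes with the recursive operations $\oplus$, $\otimes$, $\oslash$ (and $\ominus$) used to build the graph. Consequently it suffices to show that $\vec{\chi}(G)=\omega_d(G)$ holds for every (extended) directed co-graph $G$, because this equality then applies verbatim to each induced subdigraph $H$, which is itself such a co-graph.

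First I would invoke Proposition \ref{prop-p}, which already establishes the chain of equalities $\vec{\chi}(G)=\chi(\un(\sym(G)))=\omega(\un(\sym(G)))=\omega_d(G)$ for every (extended) directed co-graph $G$. In particular, the endpoints give exactly $\vec{\chi}(G)=\omega_d(G)$, which is the desired identity. Thus the main work has effectively been done in Proposition \ref{prop-p}, and the proof of perfectness reduces to combining that equality with the hereditary property of the graph class.

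The remaining step is therefore to make the hereditary argument explicit. I would argue that if $G$ is a (extended) directed co-graph with construction tree (di-co-tree or ex-di-co-tree) $T$, and $H=G[V']$ for some $V'\subseteq V$, then $H$ is again a (extended) directed co-graph: one obtains a construction tree for $H$ by restricting $T$ to the leaves in $V'$ and pruning, using that each of the operations $\oplus$, $\otimes$, $\oslash$, $\ominus$ restricted to a subset of vertices is again an operation of the same type on the surviving parts. Applying Proposition \ref{prop-p} to $H$ then yields $\vec{\chi}(H)=\omega_d(H)$, and since $H$ was an arbitrary induced subdigraph, $G$ satisfies Definition \ref{defp} and is perfect.

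The step I expect to require the most care is the hereditary closure claim, since the nontrivial generalization is the extended case with the directed union $\ominus$: one must check that an induced subdigraph of $G_1\ominus G_2$ still sits between $(G_1[V_1'])\oplus(G_2[V_2'])$ and $(G_1[V_1'])\oslash(G_2[V_2'])$, so that it is a legitimate directed union of the restricted parts. This is a routine but genuine verification; once it is in place, everything else follows immediately from Proposition \ref{prop-p}.
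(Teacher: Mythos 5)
Your proposal is correct and follows essentially the same route as the paper's proof: invoke Proposition \ref{prop-p} to get $\vec{\chi}(G)=\omega_d(G)$ and combine it with the fact that (extended) directed co-graphs are closed under taking induced subdigraphs, thereby verifying Definition \ref{defp}. The paper states the hereditary closure without proof, whereas you spell out the verification (including the $\ominus$ case), which is a harmless elaboration of the same argument.
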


\begin{proof}
We show the result by verifying Definition \ref{defp}.
Since every induced subdigraph of a (-n extended) directed co-graph
is a (-n extended) directed co-graph, Proposition \ref{prop-p}
implies that every (extended) directed co-graph is  a perfect digraph.
\end{proof}

Alternatively,
the last result can be shown by the {\em Strong Perfect Digraph Theorem} \cite{AH15}
since for every (extended) directed co-graph $G$ the symmetric part $\un(\sym(G))$ is an
undirected co-graph and thus a perfect graph. Furthermore (extended) directed co-graphs
do not contain a directed cycle $\overrightarrow{C_n}$, $n\geq 3$,
as an induced subdigraph \cite{CP06,GKR19h}.

%Due \cite{AH15} for every perfect digraph the problems of
%determining the  dichromatic number, the clique number, and the maximum order of an
%induced acyclic subdigraph are solvable in polynomial time. For directed
%co-graphs all values even can be computed in linear time.

The results of Theorem \ref{algopd} and Propositions  \ref{prop-p} and \ref{dicop} can be generalized
to larger classes. Motivated by the idea of tree-co-graphs \cite{Tin89} we
replace in Definition \ref{dcog} the single vertex graphs by a DAG, for which we
know that the dichromatic number and
also the clique number is equal to $1$. Thus, Lemma \ref{le-dec} can be
adapted to compute the  dichromatic number in linear time. Furthermore
following the proof of Proposition~\ref{dicop} we know that this class
is perfect.

\subsection{Acyclic coloring of directed cactus forests}

We recall the definition of directed cactus forests, which was introduced in \cite{GR19}
as a counterpart for undirected cactus forests.

\begin{definition}[Directed cactus forests \cite{GR19}]\label{def-tl}
A \emph{directed cactus forest} is a digraph, where any two directed cycles
have at most one joint vertex.
\end{definition}

%All graph classes of Definition  \ref{def-tl} can also be restricted to oriented graphs
%by substituting digraph by oriented digraph.

%\begin{observation}\label{obs-pl}
%Directed cactus forests and thus all classes defined in Definition \ref{def-tl} are planar
%but none of them is oriented.
%\end{observation}

Since directed cactus forests may contain a cycle they have
dichromatic number at least  $2$. 
Further, the set of all cactus forests is a subclass of the digraphs of directed tree-width at most $1$ \cite{GR19}, which
is a subclass of non-even digraphs \cite{Wie20}. By \cite{MSW19} we can conclude that every cactus forest
has dichromatic number at most $2$ and that
for every cactus forest an optimal acyclic coloring 
can be computed in polynomial time. In order
to improve the running time, we show the following lemma.

\begin{lemma}\label{cf2}
Let $G$ be a  directed cactus forest. Then, an acyclic 2-coloring for $G$ and
can be computed in linear time.
\end{lemma}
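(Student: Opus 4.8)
The plan is to establish the linear-time 2-coloring by exploiting the sparse cycle structure of a directed cactus forest, where any two directed cycles share at most one vertex. First I would reduce to the connected case: the color classes can be combined across connected components, so it suffices to 2-color each weakly connected component and then take the union of the resulting classes. Within a component I would consider the ``cactus'' block structure in which distinct directed cycles are almost vertex-disjoint, and I would process the digraph by a single depth-first traversal of its underlying structure, assigning colors greedily so that no directed cycle becomes monochromatic.

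The key observation driving the construction is that since the digraph has dichromatic number at most $2$ (by the non-even argument cited before the lemma, via directed tree-width $1$), we only need to guarantee that every directed cycle receives both colors, i.e.\ that no $\overrightarrow{C_n}$ with $n\geq 2$ is monochromatic. Because any two directed cycles meet in at most one vertex, the cycles behave almost independently: I would identify each directed cycle of the cactus forest (these can be enumerated in linear time from a traversal) and, on each cycle, ensure that at least one vertex receives color $1$ and at least one receives color $2$. The main steps in order are: (i) compute the connected components and a spanning traversal of each in linear time; (ii) detect the directed cycles, using the fact that in a cactus the cycles are edge-disjoint and nearly vertex-disjoint, so each arc lies on at most one directed cycle; (iii) walk along the traversal assigning colors, breaking the monochromaticity of each cycle by recoloring a single designated vertex on that cycle if necessary; and (iv) verify that each color class induces an acyclic subdigraph, which holds precisely because every directed cycle now contains a vertex of each color and hence is broken in both classes.

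The step I expect to be the main obstacle is (iii): one must argue that the local recoloring used to break one cycle cannot inadvertently make a different cycle monochromatic. Here the cactus property is essential, since two directed cycles share at most one vertex, recoloring a vertex that is not a shared cut vertex affects only a single cycle, and one can always choose the designated ``breaker'' vertex on each cycle to avoid the at most one shared vertex (a directed cycle has length at least two, and if it shares one vertex with another cycle there remains at least one other vertex to recolor). I would formalize this by choosing breaker vertices in an order consistent with the block-cut-tree of the cactus forest, so that the coloring of each cycle is finalized without disturbing cycles processed earlier. Finally I would note that every operation---component decomposition, cycle detection in a cactus, and the single pass of color assignment---runs in time linear in the number of vertices and arcs, giving the claimed linear-time bound.
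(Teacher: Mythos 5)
Your plan follows essentially the same route as the paper's proof: the paper likewise reduces acyclicity of both color classes to the condition that no directed cycle is monochromatic, enumerates the directed cycles of the cactus forest, arranges them (together with the vertices lying on no cycle) in a tree-like auxiliary DAG $D_G$, and processes this DAG in topological order, giving exactly one ``breaker'' vertex of each cycle color $1$ and all remaining vertices color $2$. Your block-cut-tree processing order plays precisely the role of the paper's topological traversal of $D_G$ in guaranteeing that breaking one cycle never re-monochromatizes a previously finished cycle, so the decomposition, the correctness argument, and the linear-time bookkeeping coincide.
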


\begin{proof}
Let $G=(V,E)$ be a directed cactus forest.
In order to define an acyclic 2-coloring $c:V\to \{1,2\}$ for $G$,
we traverse a DAG $D_G$ defined as follows.
$D_G$ has a vertex for every  cycle $C$ on at least two vertices  in $G$ and
a vertex for every vertex of $G$ which is not
involved in any such cycle of $G$.

Two vertices
in $D_G$  which both represent a  cycle in $G$ are
adjacent by a single (arbitrary chosen)
directed edge in $D_G$ if these cycles have a common vertex in $G$.
A vertex  in $D_G$ which represents a  cycle $C$ in $G$ and
a vertex  in $D_G$ which represents a vertex $u$ in $G$ which
is not involved in any   cycle in $G$ are adjacent in the same
way as the vertex of $C$ is adjacent to $u$ in $G$.
Two vertices in $D_G$  which both represent a vertex in $G$ which
is not involved in any  cycle in $G$ are
adjacent in $D_G$ in the same way as they are adjacent in $G$.
%In order to obtain a rooted tree, we choose one vertex  of $T_G$
%%corresponding to a cycle of $G$
%as its root.

Then, we consider the vertices $v$ of $D_G$ by a topological ordering
in order to define an acyclic 2-coloring for $G$.
\begin{itemize}

\item If $v$  represents a vertex in $G$ which
is not involved in any cycle in $G$, we define $c(v)=2$.

\item If $v$ represents a cycle $C$ in $G$, then we distinguish the following cases.

\begin{itemize}
\item If all vertices of $C$ are unlabeled up to now, we choose an arbitrary vertex
$x\in C$ and define $c(x)=1$ and $c(y)=2$ for all $y\in C-\{x\}$.

\item If we have already labeled one vertex $x$ of $C$ by $1$ then
we define $c(y)=2$ for all $y\in C-\{x\}$.

\item If we have already labeled one vertex $x$ of $C$ by $2$ then we choose an arbitrary vertex
$x'\in C-\{x\}$ and define $c(x')=1$ and further $c(y)=2$ for all $y\in C-\{x,x'\}$.
\end{itemize}
\end{itemize}
By this definitions, in every  cycle $C$ of $G$ we color exactly one vertex of $C$ by $1$
and all remaining vertices of $G$ by $2$.
Thus, $c$ leads to a legit acyclic 2-coloring for $G$.
\end{proof}

 If $G$ is a DAG, then $c(x)=1$ for
$x\in V$ leads to an  acyclic 1-coloring for $G$. Otherwise, Lemma \ref{cf2} leads to an
 acyclic 2-coloring for $G$.

\begin{theorem}\label{algocf}
Let $G$ be a  directed cactus forest. Then, an optimal acyclic coloring for $G$ and
$\vec{\chi}(G)$  can be computed in linear time.
\end{theorem}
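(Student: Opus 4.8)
The plan is to combine the structural characterization of directed cactus forests with the two preceding results. The key observation is that Theorem \ref{algocf} asks for an \emph{optimal} acyclic coloring, and by the discussion preceding the statement we already know that every directed cactus forest has dichromatic number at most $2$, with equality precisely when the digraph contains a directed cycle (equivalently, is not a DAG). So the proof reduces to a fast dichotomy: decide whether $G$ is a DAG, and then invoke the appropriate coloring routine.

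First I would recall the standing facts. An earlier observation states that the set of DAGs is exactly the set of digraphs of dichromatic number $1$, so if $G$ is acyclic then $\vec{\chi}(G)=1$ and the constant coloring $c(x)=1$ is optimal. If $G$ is not acyclic, then since $G$ contains a directed cycle $\overrightarrow{C_n}$ for some $n\geq 2$ as a subdigraph, Observation \ref{le-ubdigraph} gives $\vec{\chi}(G)\geq \vec{\chi}(\overrightarrow{C_n})=2$, so $\vec{\chi}(G)=2$ by the established upper bound, and Lemma \ref{cf2} produces an acyclic $2$-coloring, which is therefore optimal.

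The remaining work is purely about running time. Testing whether a digraph is a DAG can be done in linear time by attempting a topological sort (for instance via a single depth-first search that detects a back arc). If the test succeeds we output the all-$1$ coloring in linear time; if it fails we run the linear-time procedure of Lemma \ref{cf2}. Hence the whole computation of both $\vec{\chi}(G)$ and a witnessing optimal coloring runs in linear time.

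The only point that requires genuine care — and the natural place for an obstacle — is ensuring that the construction in Lemma \ref{cf2}, in particular building the auxiliary DAG $D_G$ whose nodes are the directed cycles and the cycle-free vertices of $G$, can itself be carried out in linear time. This rests on the defining property of cactus forests that any two directed cycles share at most one vertex, so the cycles together with the isolated vertices form a tree-like (block-like) decomposition that can be extracted in linear time, analogous to computing the block structure of an undirected graph. I would therefore cite the linear-time realizability of Lemma \ref{cf2} as already granted and simply compose it with the linear-time DAG test; the dichotomy on $\vec{\chi}(G)\in\{1,2\}$ then makes the output provably optimal.
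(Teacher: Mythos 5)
Your proposal is correct and matches the paper's own argument: the paper likewise proves the theorem by the dichotomy of testing in linear time whether $G$ is a DAG, outputting the all-$1$ coloring if so, and otherwise invoking the linear-time acyclic $2$-coloring of Lemma \ref{cf2}, which is optimal because a directed cycle forces $\vec{\chi}(G)\geq 2$ by Observation \ref{le-ubdigraph}. Your additional remark that the optimality hinges on this lower bound, and that the linear-time construction of $D_G$ is exactly what Lemma \ref{cf2} provides, is precisely how the paper structures the argument.
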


%%%%%%%%%%%%%%%%%%%%%%%%%%%%%%%%%%%%%%%%%%%%%%%%%%%%%%%%%%%%%%%%%%%%%%%%%%
\section{Parameterized algorithms for directed clique-width}
%%%%%%%%%%%%%%%%%%%%%%%%%%%%%%%%%%%%%%%%%%%%%%%%%%%%%%%%%%%%%%%%%%%%%%%%%%

For undirected graphs the clique-width \cite{CO00} is one of the most important parameters.
Clique-width measures how difficult it is to decompose the graph into a special tree-structure. 
From an algorithmic point of view, 
only tree-width \cite{RS86} is a more studied graph parameter.
Clique-width is more general than  tree-width since
graphs of bounded tree-width have also bounded clique-width \cite{CR05}.
The tree-width can only be bounded by the clique-width under
certain conditions \cite{GW00}.
Many NP-hard graph problems admit poly\-nomial-time solutions when restricted to
graphs of bounded tree-width or graphs of bounded clique-width.

For directed graphs there are several attempts
to generalize tree-width 
such as directed path-width, directed tree-width, DAG-width, or Kelly-width, which
are representative for what people are working on, see  the surveys
\cite{GHKLOR14,GHKMORS16}. Unfortunately, none of these attempts
allows polynomial-time algorithms for a large class of problems on
digraphs of bounded width. This also holds for $\DCN_r$ and $\DCN$ by the following
theorem.

\begin{theorem}[\cite{MSW19}]
For every  $r \geq 2$ the $r$-Dichromatic Number problem is NP-hard even for input digraphs
whose feedback vertex set number is
at most $r+4$ and whose out-degeneracy is at most $r+1$.
\end{theorem}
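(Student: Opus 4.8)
The plan is to split the statement into a \emph{base case} and a \emph{lifting step}. Since the bounds in the statement are $r+4$ and $r+1$, and since $2+4=6$ and $2+1=3$, the case $r=2$ is exactly the claim that $\DCN_2$ is NP-hard already for digraphs whose feedback vertex set number is at most $6$ and whose out-degeneracy is at most $3$. I would prove this base case by an explicit reduction from an NP-complete problem, and then lift it to every $r\geq 2$ by a padding construction that raises each of the two bounds by exactly one per extra required colour. The lifting is the routine part; the base case carries all the difficulty.

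For the lifting I would argue as follows. Let $H$ be an instance of $\DCN_2$ whose feedback vertex set number is at most $6$ and whose out-degeneracy is at most $3$. Let $\overleftrightarrow{K_{r-2}}$ be the complete biorientation of the complete graph on $r-2$ vertices; it is a symmetric digraph, so $\vec{\chi}(\overleftrightarrow{K_{r-2}})=\chi(K_{r-2})=r-2$ by Observation~\ref{obs-dicol}. Put $G:=H\otimes\overleftrightarrow{K_{r-2}}$. By Lemma~\ref{le-dec}(3) we have $\vec{\chi}(G)=\vec{\chi}(H)+(r-2)$, so $\vec{\chi}(G)\leq r$ if and only if $\vec{\chi}(H)\leq 2$, which is the desired reduction. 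For the feedback vertex set number, deleting the $r-2$ clique vertices together with a minimum feedback vertex set of $H$ (of size at most $6$) leaves an induced subdigraph of $H$ with its feedback vertices removed, hence a DAG; thus $G$ has feedback vertex set number at most $(r-2)+6=r+4$. For the out-degeneracy, take any induced subdigraph $G[S]$: if $S$ meets $V(H)$, then the induced subdigraph of $H$ on $S\cap V(H)$ contains a vertex $v$ with $\odeg(v)\leq 3$, and in $G[S]$ this $v$ acquires at most $r-2$ further out-neighbours (the clique vertices of $S$), so its out-degree is at most $3+(r-2)=r+1$; if instead $S\subseteq V(\overleftrightarrow{K_{r-2}})$, then $G[S]$ is a sub-biclique of out-degree at most $r-3$. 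Hence the out-degeneracy of $G$ is at most $r+1$, as required. (For $r=2$ the clique is empty and $G=H$, recovering the base case.)

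It remains to prove the base case, for which I would reduce from a standard NP-complete problem such as $3$-SAT, or a variant like NAE-SAT or hypergraph $2$-colouring. The design principle is to encode the instance in a digraph that is \emph{almost acyclic}: the variable and clause gadgets are laid out along a fixed topological backbone so that all ``forward'' arcs respect one global ordering, while the only arcs capable of closing a directed cycle are routed through a constant number of distinguished hub vertices. A satisfying assignment then corresponds to choosing, gadget by gadget, which of the two colour classes absorbs the potential cycle, with the hubs enforcing global consistency, so that an acyclic $2$-colouring exists precisely when the formula is satisfiable. One finally verifies that the hubs form a feedback vertex set of size at most $6$ and that every induced subdigraph has a vertex of out-degree at most $3$.

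The main obstacle is exactly this base construction. In ordinary colouring-hardness reductions one is free to attach a private gadget cycle to each constraint, but here the feedback vertex set number must stay a fixed constant independent of the instance size, which forbids introducing a new cycle per clause: every directed cycle has to be funnelled through the same constant-size hub set while still carrying enough information to encode satisfiability. At the same time the out-degeneracy bound of $3$ limits how densely the hubs may be wired into the backbone, so the two requirements pull against each other and the gadgetry must be balanced with care. Once the base case is established, the padding argument above is mechanical and delivers the statement for every $r\geq 2$.
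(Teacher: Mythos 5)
This theorem is not proved in the paper at all: it is imported verbatim from \cite{MSW19}, so there is no ``paper proof'' to compare against, and your proposal has to stand on its own as a complete argument. Your lifting step does stand: composing a hard instance $H$ of $\DCN_2$ with $\overleftrightarrow{K_{r-2}}$ via the series composition gives $\vec{\chi}(H\otimes\overleftrightarrow{K_{r-2}})=\vec{\chi}(H)+(r-2)$ by Lemma~\ref{le-dec}(3), and your accounting of the feedback vertex set number ($6+(r-2)=r+4$) and of the out-degeneracy ($3+(r-2)=r+1$, since a minimum-outdegree vertex of any induced subdigraph of $H$ gains at most $r-2$ out-neighbours in the clique) is correct. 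This padding is essentially the standard way such bounds with an additive dependence on $r$ arise.

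The genuine gap is the base case, which you yourself identify as carrying ``all the difficulty'' and then do not prove. What you offer is a design philosophy --- a topological backbone, constraint gadgets, a constant number of ``hub'' vertices through which every directed cycle must pass --- but no actual gadgets, no specified source problem (you hedge between $3$-SAT, NAE-SAT and hypergraph $2$-colouring), no argument that satisfying assignments correspond to acyclic $2$-colourings, and no verification that the construction meets the bounds $6$ and $3$. The tension you describe (a constant-size feedback vertex set forbids one private cycle per clause, while low out-degeneracy limits how heavily the hubs can be wired to the rest) is precisely the obstruction that the construction in \cite{MSW19} is engineered to overcome, and resolving it is the entire content of the theorem. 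As it stands, your proposal reduces the theorem to a strictly weaker unproved claim (the $r=2$ case), so it is a correct reduction scaffold around a missing core rather than a proof.
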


Thus, even for bounded size of a directed feedback vertex set,
deciding whether a directed graph has dichromatic number at most 2 is NP-complete.
This result rules out $\xp$-algorithms for $\DCN$ and $\DCN_r$ by directed width
parameters such as directed path-width, directed tree-width, DAG-width or Kelly-width, 
since all of these are upper bounded by the feedback vertex set number.

\begin{corollary}\label{cor-xp-ro}
The Dichromatic Number problem  is not in $\xp$ when parameterized by directed 
tree-width, directed path-width, Kelly-width, or DAG-width, unless  $\p=\np$.
\end{corollary}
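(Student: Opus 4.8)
The plan is to argue by contradiction, combining the NP-hardness statement of \cite{MSW19} recorded in the preceding theorem with the fact, recalled in the surrounding discussion, that each of the four directed width parameters is upper bounded in terms of the (directed) feedback vertex set number. I would fix $r=2$ and work with the class $\mathcal{C}$ of digraphs produced by that hardness theorem: on $\mathcal{C}$ the problem $\DCN_2$ is already NP-hard, and every $G\in\mathcal{C}$ has feedback vertex set number at most $r+4=6$.

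First I would make the boundedness of the width parameters explicit. Since directed tree-width, directed path-width, Kelly-width and DAG-width are all bounded by a function of the feedback vertex set number, there is an absolute constant $c$ (independent of $|V(G)|$) such that every $G\in\mathcal{C}$ has each of these four parameters at most $c$.

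Next, suppose toward a contradiction that $\DCN$ were in $\xp$ when parameterized by directed tree-width. By definition of $\xp$, there is an algorithm deciding $\DCN$ in time $n^{f(k)}$, where $n=|V(G)|$, $k$ is the directed tree-width of the input, and $f$ is some computable function. Restricting this algorithm to instances $(G,2)$ with $G\in\mathcal{C}$ --- so that it decides exactly $\DCN_2$ --- the exponent satisfies $f(k)\le\max_{0\le j\le c}f(j)=:C$, a constant, and hence the running time is $n^{C}=n^{\bigo(1)}$. This would be a polynomial-time algorithm for the NP-hard problem $\DCN_2$ on $\mathcal{C}$, forcing $\p=\np$. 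Contrapositively, if $\p\neq\np$ then $\DCN$ is not in $\xp$ parameterized by directed tree-width.

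Finally, the same argument applies verbatim to directed path-width, Kelly-width and DAG-width, using the same class $\mathcal{C}$ and the same bound on the feedback vertex set number, since each of these parameters is likewise bounded by a function of it. I do not expect a genuine combinatorial obstacle: all of the real work is carried by the cited hardness theorem and by the known inequalities between the width parameters and the feedback vertex set number. The only point needing a word of care is the routine observation that an $\xp$ running time collapses to a polynomial once the parameter is a constant, which is immediate by bounding $f$ on the finite range $\{0,\ldots,c\}$.
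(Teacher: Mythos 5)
Your proposal is correct and follows exactly the paper's reasoning: the paper derives this corollary from the NP-hardness result of \cite{MSW19} for digraphs of feedback vertex set number at most $r+4$, combined with the fact that directed tree-width, directed path-width, DAG-width and Kelly-width are all upper bounded in terms of the feedback vertex set number, so an $\xp$-algorithm would solve the NP-hard problem $\DCN_2$ in polynomial time on these bounded-width instances. You merely spell out the routine step (that an $\xp$ running time becomes polynomial once the parameter is bounded by a constant) which the paper leaves implicit.
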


Next, we discuss parameters which allow
$\xp$-algorithms or even $\fpt$-algorithms for $\DCN$ and $\DCN_r$.
The first positive result concerning structural parameterizations of $\DCN$
was recently given in \cite{SW19} using the directed modular width ($\dmws$).

\begin{theorem}[\cite{SW19}]\label{fpt-mw}
The  Dichromatic Number problem is in $\fpt$ when parameterized by directed modular width.
\end{theorem}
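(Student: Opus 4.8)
\noindent
The plan is to exploit the recursive structure furnished by the modular decomposition of $G$. Recall that a \emph{module} of a digraph is a vertex set $M$ such that every vertex outside $M$ has the same arc relation to all vertices of $M$; the modular decomposition arranges $G$ in a tree whose internal nodes each correspond to substituting pairwise disjoint modules $G_1,\ldots,G_k$ into the vertices $v_1,\ldots,v_k$ of a \emph{quotient} digraph $H$, where the number $k$ of parts is bounded by the directed modular width $\dmws(G)$. Such a decomposition is computable in polynomial time, so it suffices to describe how the dichromatic numbers of the modules combine at a single internal node, and then to run bottom-up dynamic programming along the decomposition tree; the leaves are single vertices with $\vec{\chi}=1$.

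\smallskip\noindent
The heart of the argument is a substitution lemma: writing $\chi_i:=\vec{\chi}(G_i)$, I claim that $\vec{\chi}(G)$ equals the least $r$ for which one can choose acyclic vertex sets $A_1,\ldots,A_r$ of the quotient $H$ so that each $v_i$ belongs to at least $\chi_i$ of them. To prove this I would first characterize when a single color class is acyclic. Fix a coloring of $G$ and a color $c$, and let $S_{i,c}$ denote the vertices of color $c$ inside $G_i$. A directed cycle monochromatic in $c$ either stays within one module $G_i$, which is ruled out exactly when each $G_i[S_{i,c}]$ is acyclic, or it meets at least two modules; in the latter case its image under the projection $V(G)\to V(H)$ is a closed walk using at least one arc, hence containing a directed cycle of $H$ among the modules $i$ with $S_{i,c}\neq\emptyset$. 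Conversely, because in the substitution every arc $(v_i,v_j)$ of $H$ expands to all arcs from $G_i$ to $G_j$, any directed cycle of $H$ through such modules lifts to a monochromatic cycle. Hence color class $c$ is acyclic if and only if every $G_i[S_{i,c}]$ is acyclic \emph{and} the set $\{\,i : S_{i,c}\neq\emptyset\,\}$ induces an acyclic subdigraph of $H$.

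\smallskip\noindent
With this equivalence in hand both directions of the lemma are short. From an acyclic $r$-coloring of $G$ I set $A_c:=\{\,i:S_{i,c}\neq\emptyset\,\}$; these are acyclic in $H$ and each $v_i$ lies in $|\{c:S_{i,c}\neq\emptyset\}|\ge\vec{\chi}(G_i)=\chi_i$ of them. Conversely, given acyclic sets $A_1,\ldots,A_r$ covering each $v_i$ at least $\chi_i$ times, I color each $G_i$ with an optimal acyclic coloring relabelled to use only the colors $\{c:v_i\in A_c\}$ (possible since there are at least $\chi_i$ of them), and the equivalence above certifies that every color class of $G$ is acyclic. (For the quotients with no arc, a single arc, or a digon this recovers Lemma~\ref{le-dec}, which is a useful consistency check.) It then remains to compute, for a fixed quotient $H$ on $k\le\dmws(G)$ vertices and given demands $\chi_i$, the minimum number of acyclic sets meeting each $v_i$ at least $\chi_i$ times. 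I would phrase this as an integer linear program with one variable $x_A$ per acyclic subset $A\subseteq V(H)$---of which there are at most $2^{k}$---minimizing $\sum_A x_A$ subject to $\sum_{A\ni v_i}x_A\ge\chi_i$; by Lenstra's algorithm an ILP is solvable in time depending only (super-)exponentially on the number of variables times a polynomial in the input, so this step runs in $f(\dmws(G))\cdot\mathrm{poly}(n)$ time.

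\smallskip\noindent
Putting these pieces together, a bottom-up pass over the modular decomposition tree computes $\vec{\chi}$ at every node from the values already computed at its children, each internal node costing $f(\dmws(G))\cdot\mathrm{poly}(n)$, for a total running time that is polynomial in $n$ with the parameter dependence confined to $f$; this places $\DCN$ in $\fpt$ parameterized by directed modular width. The main obstacle is the substitution lemma, and within it the verification that cross-module monochromatic cycles correspond \emph{exactly} to directed cycles of the quotient among the color-using modules; the remaining covering computation is routine once it is recast as a bounded-variable integer program.
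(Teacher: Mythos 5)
The paper offers no proof of this theorem: it is stated with a citation to \cite{SW19}, so the only in-paper ``proof'' is that reference. Your argument is correct, and it reconstructs essentially the approach of the cited work: a bottom-up dynamic program over the modular decomposition tree, driven by a substitution lemma (a color class is acyclic if and only if its trace on each module is acyclic and the set of modules it meets induces an acyclic subdigraph of the quotient), with the per-node covering problem --- choose a minimum number of acyclic subsets of the quotient so that each vertex $v_i$ is covered at least $\vec{\chi}(G_i)$ times --- solved as an integer program with at most $2^k$ variables via Lenstra's theorem. The two steps you flag as delicate are indeed sound: any closed directed walk with at least one arc in the quotient contains a directed cycle, and conversely every such cycle of the quotient lifts to a monochromatic cycle of $G$ because substitution inserts all arcs between adjacent modules.
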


By  \cite{GHKLOR14}, directed
clique-width performs much better than directed path-width, directed tree-width, DAG-width, and
Kelly-width from the parameterized complexity point of view.
Hence, we consider the parameterized complexity of $\DCN$   parameterized by
directed clique-width. 
%Directed
%clique-width  measures the difficulty of
%decomposing a digraph into a special tree-structure.
%The directed clique-width of digraphs
%has been defined by  Courcelle
%and Olariu \cite{CO00} as follows.

\begin{definition}[Directed clique-width \cite{CO00}]\label{D4}
The {\em directed clique-width} of a digraph $G$, $\dcws(G)$ for short,
is the minimum number of labels needed to define $G$ using the following four operations:

\begin{enumerate}
\item Creation of a new vertex $v$ with label $a$ (denoted by $a(v)$).

\item Disjoint union of two labeled digraphs $G$ and $H$ (denoted by $G\oplus H$).

\item Inserting an arc from every vertex with label $a$ to every vertex with label $b$
($a\neq b$, denoted by $\alpha_{a,b}$).

\item Change label $a$ into label $b$ (denoted by $\rho_{a\to b}$).
\end{enumerate}
An expression $X$ built with the operations defined above 
using $k$ labels is called a {\em directed  clique-width $k$-expression}.
Let $\g(X)$ be the digraph defined by $k$-expression $X$.
\end{definition}
%

%\begin{example}The $3$-expression
%\begin{equation}
%X=\alpha_{2,3}(\rho_{3\rightarrow 2}(\rho_{2\rightarrow 1}(\alpha_{2,3}(\alpha_{1,2}(1(v_1)\oplus 2(v_2))\oplus 3(v_3))))\oplus 3(v_4)) \label{eq-p4}
%\end{equation}
%defines $\g(X)$, which   is a directed path on four vertices.
%\end{example}

In \cite{GWY16} the set of directed co-graphs is characterized by excluding two
digraphs  as a proper subset
of the set of all graphs of directed clique-width 2, while for the  undirected versions
both classes are equal.
%difference to the undirected versions of the graphs.

By the given definition every graph of directed clique-width at most $k$ can be represented by a tree structure,
denoted as {\em $k$-expression-tree}. The leaves of the  $k$-expression-tree represent the
vertices of the digraph and the inner nodes of the  $k$-expression-tree  correspond
to the operations applied to the subexpressions defined by the subtrees.
Using the  $k$-expression-tree many hard problems have been shown to be
solvable in polynomial time when restricted to graphs of bounded directed clique-width \cite{GWY16,GHKLOR14}.

Directed clique-width is not comparable to the directed variants of tree-width mentioned above, 
which can be observed by the set of all complete biorientations of cliques and the set of all acyclic
orientations of grids.
The relation of directed clique-width and directed modular width  \cite{SW20} is as follows.

\begin{lemma}[\cite{SW20}] For every digraph $G$ it holds that
$\dcws(G)\leq \dmws(G)$.
\end{lemma}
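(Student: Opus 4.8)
The plan is to argue by induction on the modular decomposition tree that underlies the directed modular width. Recall that this decomposition expresses a digraph $G$ as a \emph{substitution}: there is a quotient digraph $H$ on vertices $\{1,\ldots,k\}$ together with vertex-disjoint digraphs $G_1,\ldots,G_k$ (the modules) such that $G$ arises from $H$ by replacing each vertex $i$ of $H$ by $G_i$, where every arc $(i,j)\in E(H)$ is expanded into all arcs from $V(G_i)$ to $V(G_j)$, while no arcs between $V(G_i)$ and $V(G_j)$ occur when $(i,j)\notin E(H)$. At each node of the decomposition the quotient taken there has at most $\dmws(G)$ vertices (with associative series-, parallel-, and order-type compositions broken into binary steps, each of which needs only two labels), and the leaves are single vertices with directed modular width $1$. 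The goal is to turn each such substitution step into a directed clique-width expression using the four operations of Definition \ref{D4}.

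The core of the argument is to realize one substitution step with $k$ labels, assuming each module $G_i$ has already been produced by a directed clique-width expression. First I would build each $G_i$ recursively and then append a relabeling so that, at the root of the subexpression for $G_i$, every vertex carries the single label $i$; since a $\rho$-relabeling changes no arcs, the internal structure of $G_i$ is preserved. Next I would take the disjoint union $G_1\oplus\cdots\oplus G_k$, after which the vertices of module $G_i$ are exactly the vertices labeled $i$. Finally, for each arc $(i,j)\in E(H)$ I would apply the operation $\alpha_{i,j}$, which inserts every arc from a vertex labeled $i$ to a vertex labeled $j$; because $i\ne j$ and the modules are vertex-disjoint with distinct labels, this realizes precisely the arc-expansion required by the substitution and creates no further arcs. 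This yields $G$ and uses only the label names $\{1,\ldots,k\}$ at this level.

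It then remains to bound the size of the label alphabet over the entire expression, and this is where the relabeling pays off. After completing the substitution step for $G$ I would append one more $\rho$-relabeling collapsing all of $V(G)$ to a single label, so that $G$ is in turn available under a single label for use as a module one level higher. By induction each $G_i$ is defined over an alphabet of size at most $\dmws(G_i)\le\dmws(G)$; since directed clique-width counts the total size of the label alphabet and the same label names may be freely reused in disjoint subexpressions, and since each combination step itself uses only labels $1,\ldots,k$ with $k\le\dmws(G)$, the whole expression stays within an alphabet of size $\dmws(G)$. Hence $\dcws(G)\le\dmws(G)$.

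I expect the main obstacle to be exactly this bookkeeping that keeps the alphabet from growing with the depth of the decomposition: the crucial point is that every module is collapsed to one label \emph{before} it is combined at its parent, so that the at most $\dmws(G)$ labels needed for a quotient are reused rather than accumulated across levels. A secondary point requiring care is matching the precise definition of $\dmws$, so that the per-node quotient size is genuinely bounded by $\dmws(G)$, and handling uniformly the various node types of a directed modular decomposition; the $\oplus$/$\alpha_{i,j}$/$\rho$ scheme above does this automatically, since it only reads off the arc set of the quotient $H$, with the long associative series, parallel, and order compositions decomposed into binary combinations that cost no more than two labels each.
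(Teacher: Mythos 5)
This lemma is not proved in the paper at all: it is quoted directly from \cite{SW20}, so there is no in-paper proof to compare against. Your argument is correct and is essentially the standard one used in the cited source: recursively build a directed clique-width expression for each module, collapse each module $G_i$ to the single label $i$ by $\rho$-relabelings before taking disjoint unions, and then realize the quotient $H$ by the operations $\alpha_{i,j}$ for $(i,j)\in E(H)$; the module property guarantees these insertions create exactly the arcs of $G$, and since labels are reused inside disjoint subexpressions the alphabet never exceeds $\dmws(G)$. Your care with binarizing the degenerate (parallel, series, order) nodes, so that the per-node quotient size is genuinely bounded by $\dmws(G)$, is exactly the point needed to make the induction match the definition of directed modular width.
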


On the other hand, there exist several classes of digraphs of bounded directed clique-width and
unbounded directed modular width, e.g. even the set of all
directed paths $\{\overrightarrow{P_n} \mid n\geq 1\}$,
the set of all directed cycles $\{\overrightarrow{C_n} \mid n\geq 1\}$, and the set of all minimal series-parallel digraphs \cite{VTL82}.
Thus, the result of \cite{SW19} does not imply any $\xp$-algorithm or $\fpt$-algorithm
for directed clique-width.

\begin{corollary}\label{hpcw}
The Dichromatic Number problem is $\w[1]$-hard on symmetric digraphs and thus, on all digraphs when parameterized by
directed clique-width.
\end{corollary}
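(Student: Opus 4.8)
The plan is to transfer the known parameterized hardness of the (undirected) Chromatic Number problem into our directed setting. Recall the result of Fomin, Golovach, Lokshtanov and Saurabh that deciding whether $\chi(G)\le r$ is $\w[1]$-hard when parameterized by the clique-width $\cws(G)$ of the undirected input graph $G$. I would exhibit a parameter-preserving reduction from this problem to the Dichromatic Number problem restricted to symmetric digraphs and parameterized by directed clique-width, so that the hardness carries over.

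Given an instance $(G,r)$ of Chromatic Number, the reduction maps it to the instance $(\overleftrightarrow{G},r)$, where $\overleftrightarrow{G}$ is the complete biorientation of $G$; this digraph is symmetric by construction and can be built in polynomial time. Correctness of the reduction is then immediate from Observation \ref{obs-dicol}: since $\un(\overleftrightarrow{G})=G$, we have $\vec{\chi}(\overleftrightarrow{G})=\chi(\un(\overleftrightarrow{G}))=\chi(G)$, so $G$ admits a proper $r$-coloring if and only if $\overleftrightarrow{G}$ admits an acyclic $r$-coloring.

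It remains to verify that the reduction preserves the parameter, i.e.\ that $\dcws(\overleftrightarrow{G})$ is bounded by a function of $\cws(G)$. For this I would take a clique-width $k$-expression for $G$ and simulate it by a directed clique-width expression over the same set of $k$ labels, replacing each undirected edge-insertion between labels $a$ and $b$ by the two arc-insertions $\alpha_{a,b}$ and $\alpha_{b,a}$, while leaving the vertex-creation, disjoint-union and relabeling operations unchanged. The resulting expression defines exactly $\overleftrightarrow{G}$ and uses $k$ labels, giving $\dcws(\overleftrightarrow{G})\le \cws(G)$, which is the inequality the reduction needs. Hence $(G,r)\mapsto(\overleftrightarrow{G},r)$ is an $\fpt$-reduction and $\w[1]$-hardness carries over to symmetric digraphs. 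Finally, since symmetric digraphs form a subclass of all digraphs, the hardness holds for all digraphs as well.

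I expect the parameter-preservation claim $\dcws(\overleftrightarrow{G})\le\cws(G)$ to be the only nontrivial point: one must check that reintroducing each undirected edge as a symmetric pair of arcs reproduces $\overleftrightarrow{G}$ without spending extra labels. Once that translation of the clique-width operations is in place, everything else is routine, and the equivalence of the yes-instances is handed to us directly by Observation \ref{obs-dicol}.
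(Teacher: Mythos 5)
Your proposal is correct and follows essentially the same route as the paper: the reduction $(G,r)\mapsto(\overleftrightarrow{G},r)$ from Chromatic Number parameterized by clique-width, with correctness given by Observation \ref{obs-dicol}. The only difference is that you establish the parameter bound $\dcws(\overleftrightarrow{G})\le\cws(G)$ directly by simulating each undirected edge insertion with the two arc insertions $\alpha_{a,b}$ and $\alpha_{b,a}$, whereas the paper simply cites \cite{GWY16} for the (stronger) fact that the clique-width of $G$ equals the directed clique-width of $\overleftrightarrow{G}$; your self-contained verification is valid and suffices for the $\fpt$-reduction.
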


\begin{proof}
The Chromatic Number problem
is $\w[1]$-hard parameterized by clique-width \cite{FGLS10a}.
An instance consisting of a graph $G=(V,E)$ and a positive integer $r$ for the
Chromatic Number problem can be transformed
into an instance for the Dichromatic Number problem
on digraph  $\overleftrightarrow{G}$ and integer $r$.
Then, $G$ has an  $r$-coloring if and only if $\overleftrightarrow{G}$ has an
acyclic $r$-coloring by Observation \ref{obs-dicol}.
Since for every undirected graph $G$ its clique-width equals the directed
clique-width of $\overleftrightarrow{G}$ \cite{GWY16}, we obtain a
parameterized reduction.
\end{proof}

Thus, under reasonable assumptions there is no $\fpt$-algorithm for the Dichromatic Number problem
parameterized by directed clique-width and an $\xp$-algorithm is the best that can be achieved.
Next, we introduce such an  $\xp$-algorithm.

Let $G=(V,E)$ be a digraph which is given by some directed clique-width $k$-expression $X$.
For some vertex set $V'\subseteq V$, we define $\reach(V')$
as the set of all pairs $(a,b)$ such that there is a vertex $u\in V'$
labeled by $a$  and there is a vertex $v\in V'$
labeled by $b$ and $v$ is reachable from $u$ in $G[V']$.

\begin{example}\label{ex-f}
We consider the digraph in Figure \ref{F06}.
The given partition into three acyclic sets $V=V_1\cup V_2 \cup V_3$, where $V_1=\{v_1,v_6,v_7\}$,
$V_2=\{v_2\}$, and $V_3= \{v_3,v_4,v_5\}$ leads to the multi set
${\cal M}=\langle \reach(V_1),\reach(V_2), \reach(V_3) \rangle$, where
$\reach(V_1)=\reach(V_3)= \{(1,1),(2,2),(4,4),(1,2),(2,4),(1,4)\}$ and $\reach(V_2) =\{(3,3)\}$.
\end{example}

\begin{figure}[hbtp]
\centerline{\includegraphics[width=0.37\textwidth]{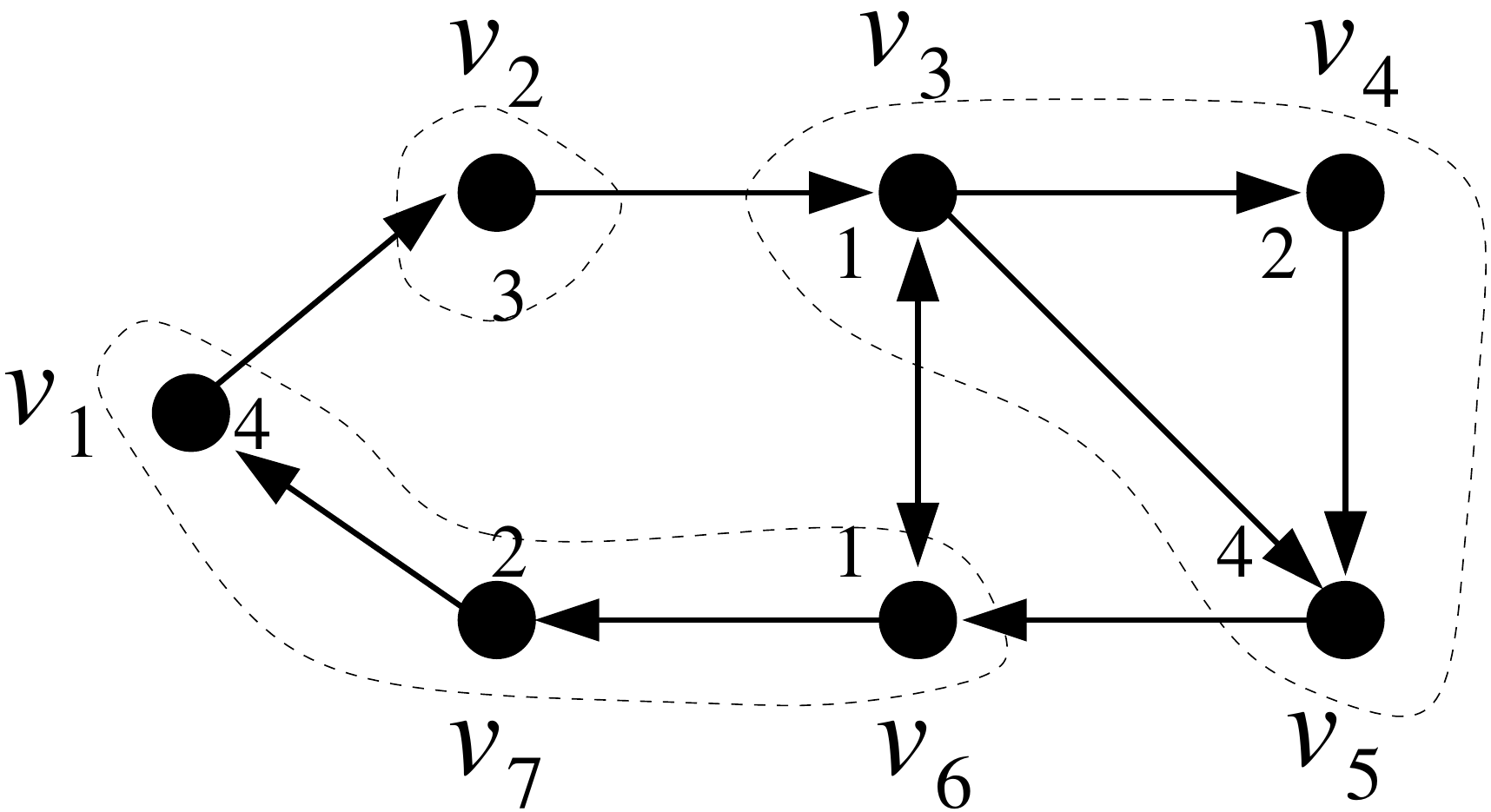}}
%\centerline{\epsfxsize=45mm \epsfbox{ex_reach.eps}}
\caption{Digraph in Example \ref{ex-f}. The dashed lines indicate a partition of
the vertex set into three acyclic sets. The numbers at the vertices represent their
labels.
\label{F06}}
\end{figure}

Within a construction of a digraph by directed clique-width operations
only the edge insertion operation can change the reachability between
the present vertices. Next, we show which acyclic sets remain acyclic
when performing an edge insertion operation and how the
reachability information of these sets have to be updated
due to the edge insertion operation.

\begin{lemma}\label{le0}
Let $G=(V,E)$ be a vertex labeled digraph defined by some  directed clique-width $k$-expression $X$,
$a\neq b$, $a,b \in\{1,\ldots,k\}$,
and $V'\subseteq V$ be an acyclic set in $G$. Then, vertex set $V'$ remains acyclic in $\g(\alpha_{a,b}(X))$
if and only if $(b,a)\not\in \reach(V')$.
\end{lemma}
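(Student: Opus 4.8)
The plan is to prove the biconditional directly by analyzing what the edge insertion operation $\alpha_{a,b}$ does to the digraph $G[V']$. The operation adds, for every pair consisting of an $a$-labeled vertex and a $b$-labeled vertex in $V'$, the arc from the former to the latter. Crucially, all newly inserted arcs go from label $a$ to label $b$, so they all point in the same ``direction'' with respect to the labels. I would fix notation by writing $G' = \g(\alpha_{a,b}(X))$ and noting that $G'[V']$ is obtained from $G[V']$ by adding exactly these $a \to b$ arcs.

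For the easy direction (contrapositive of ``only if''), suppose $(b,a) \in \reach(V')$. Then in $G[V']$ there is a vertex $u$ labeled $b$ and a vertex $w$ labeled $a$ with $w$ reachable from $u$, i.e. a directed path $u \rightsquigarrow w$. The insertion operation adds the arc $(w,u)$ since $w$ carries label $a$ and $u$ carries label $b$. Concatenating the path $u \rightsquigarrow w$ with the new arc $(w,u)$ yields a directed cycle through $u$ and $w$, so $V'$ is no longer acyclic in $G'$. This shows $V'$ acyclic in $G'$ implies $(b,a) \notin \reach(V')$.

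For the converse (``if''), assume $(b,a) \notin \reach(V')$ and suppose toward a contradiction that $G'[V']$ contains a directed cycle $Z$. Since $G[V']$ was acyclic, $Z$ must use at least one newly inserted arc; every such arc goes from an $a$-labeled vertex to a $b$-labeled vertex. The key observation is that traversing $Z$ once and returning to the start, the cycle must contain at least one newly inserted arc $(x,y)$ with $x$ labeled $a$ and $y$ labeled $b$. Following $Z$ from $y$ back around to $x$, I want to extract a path in $G[V']$ from a $b$-labeled vertex to an $a$-labeled vertex, which would force $(b,a) \in \reach(V')$, a contradiction. The main obstacle is that the return portion of $Z$ from $y$ to $x$ may itself use several inserted $a \to b$ arcs, so it is not immediately a genuine path in the old graph $G[V']$.

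I expect to resolve this obstacle as follows: consider the cyclic sequence of vertices on $Z$ and look at the subsequence of vertices whose labels are $a$ or $b$, in the order they appear. Every inserted arc steps from an $a$-vertex directly to a $b$-vertex, while every original arc of $G[V']$ is present in $G[V']$. Take any inserted arc $(x,y)$ on $Z$ (with $x$ labeled $a$, $y$ labeled $b$). Starting at $y$ and proceeding along $Z$, consider the first point at which an $a$-labeled vertex is reached that is the tail of an inserted arc; the portion of $Z$ strictly between consecutive inserted arcs consists entirely of original arcs of $G[V']$ and therefore is a genuine directed path in $G[V']$. In particular the segment of $Z$ from the head $y$ (labeled $b$) of one inserted arc up to the tail (labeled $a$) of the next inserted arc encountered is a directed path in $G[V']$ from a $b$-labeled vertex to an $a$-labeled vertex. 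Such a path witnesses $(b,a) \in \reach(V')$, contradicting the assumption. Hence no cycle exists and $V'$ stays acyclic, completing the proof.
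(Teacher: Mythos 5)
Your proof is correct, and for the substantive direction it takes a genuinely different route from the paper. The ``only if'' direction (a path from a $b$-labeled vertex to an $a$-labeled vertex plus the newly inserted back-arc closes a cycle) is exactly the paper's argument. For the ``if'' direction, however, the paper argues via orderings: from $(b,a)\notin\reach(V')$ it asserts the existence of a topological ordering of $\g(X)[V']$ in which every $a$-labeled vertex precedes every $b$-labeled vertex, and observes that this same ordering remains topological for $\g(\alpha_{a,b}(X))[V']$. You instead suppose a directed cycle $Z$ survives the insertion and decompose it at its inserted arcs: every inserted arc on $Z$ runs from an $a$-labeled tail to a $b$-labeled head, the segment of $Z$ from the head of one inserted arc to the tail of the next consists only of original arcs (and is nonempty, since a single vertex cannot carry both labels $a$ and $b$ when $a\neq b$), so it is a genuine path in $G[V']$ from a $b$-labeled vertex to an $a$-labeled vertex, contradicting $(b,a)\notin\reach(V')$. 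Comparing the two: the paper's ordering argument is shorter, but the existence of the special topological ordering is asserted rather than proved --- filling it in requires an auxiliary construction (e.g., place first the vertices that can reach an $a$-labeled vertex, a set that contains no $b$-labeled vertex by hypothesis and is closed under predecessors, then the rest), which is essentially the same amount of work hidden in one sentence. Your cycle-decomposition is more verbose but fully self-contained, relying only on the facts that inserted arcs all point from label $a$ to label $b$ and that labels are unique per vertex; it is a perfectly valid replacement for the paper's proof.
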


\begin{proof}
If $(b,a)\in \reach(V')$, then we know that in $\g(X)$ there is a vertex $y$
labeled by $a$ which is reachable by a vertex $x$ labeled by $b$. That is,
in $\g(X)$ there is a directed path $P$ from $x$ to $y$.
The edge insertion $\alpha_{a,b}$ leads to the arc $(y,x)$ which leads together with
path $P$ to a cycle in  $\g(\alpha_{a,b}(X))$.

If $(b,a)\not\in \reach(V')$ and $V'\subseteq V$ is an acyclic set in $\g(X)$,
then there is a topological ordering of $\g(X)[V']$
such that every vertex labeled by $a$ is before every vertex labeled by $b$ in the ordering.
The same ordering is  a topological ordering for $\g(\alpha_{a,b}(X))[V']$ which implies that
$V'$ remains acyclic for $\g(\alpha_{a,b}(X))$.
\end{proof}

\begin{lemma}\label{le0x}
Let $G=(V,E)$ be a vertex labeled digraph defined by some  directed clique-width $k$-expression $X$,
$a\neq b$, $a,b \in\{1,\ldots,k\}$, $V'\subseteq V$ be an acyclic set in $G$, and $(b,a)\not\in \reach(V')$.
Then, $\reach(V')$ for $\g(\alpha_{a,b}(X))$ can be obtained
from $\reach(V')$ for  $\g(X)$ as follows:
\begin{itemize}
\item For every pair $(x,a)\in \reach(V')$  and every pair $(b,y)\in \reach(V')$, we  extend $\reach(V')$  by $(x,y)$.
\end{itemize}
\end{lemma}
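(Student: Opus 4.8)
The plan is to prove the two set inclusions between $\reach(V')$ taken in $\g(\alpha_{a,b}(X))$ and the set obtained from $\reach(V')$ in $\g(X)$ by the stated update rule. Write $G=\g(X)$ and $G'=\g(\alpha_{a,b}(X))$, and observe that $G'[V']$ and $G[V']$ differ only in the arcs added by $\alpha_{a,b}$, namely an arc $(s,t)$ for every $s\in V'$ labeled $a$ and every $t\in V'$ labeled $b$ (these are genuine arcs of $G'[V']$, since $a\neq b$ forces $s\neq t$ and both endpoints lie in $V'$). Because inserting arcs can never destroy an existing directed path, every pair already in $\reach(V')$ for $G$ survives in $\reach(V')$ for $G'$; hence it remains only to account for the pairs newly created by the inserted arcs.

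For the soundness (``$\supseteq$'') direction I would verify that each pair the rule adds is indeed realized in $G'[V']$. Suppose $(x,a)\in\reach(V')$ and $(b,y)\in\reach(V')$ in $G$. Then there are vertices $u,s\in V'$ with $u$ labeled $x$, $s$ labeled $a$, and $s$ reachable from $u$ in $G[V']$, and vertices $t,v\in V'$ with $t$ labeled $b$, $v$ labeled $y$, and $v$ reachable from $t$ in $G[V']$. Since $s$ is labeled $a$ and $t$ is labeled $b$, the arc $(s,t)$ is present in $G'[V']$, so concatenating the path from $u$ to $s$, this new arc, and the path from $t$ to $v$ yields a directed path from $u$ to $v$ in $G'[V']$; thus $(x,y)\in\reach(V')$ for $G'$. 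The degenerate cases $u=s$ or $t=v$ are harmless, as the corresponding subpath is then trivial and the concatenation still goes through.

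The completeness (``$\subseteq$'') direction is where the real work lies. I would take an arbitrary pair $(p,q)\in\reach(V')$ for $G'$, witnessed by a directed path $P$ from some $u$ (labeled $p$) to some $v$ (labeled $q$) in $G'[V']$. If $P$ uses no inserted arc, then $P$ already lies in $G[V']$ and $(p,q)$ is an old pair. Otherwise I would locate the first inserted arc of $P$, whose tail $s$ is labeled $a$, and the last inserted arc of $P$, whose head $t'$ is labeled $b$: the prefix of $P$ up to $s$ and the suffix of $P$ from $t'$ use only old arcs, so they certify $(p,a)\in\reach(V')$ and $(b,q)\in\reach(V')$ in $G$, respectively, whence the rule adds exactly $(p,q)$. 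The main obstacle — and the point to handle with care — is this decomposition together with its boundary cases: when $u=s$ (so $p=a$) or $t'=v$ (so $q=b$) the relevant prefix or suffix is trivial, and one relies on the convention, visible in Example \ref{ex-f}, that every label occurring on a vertex of $V'$ contributes its reflexive pair to $\reach(V')$, so that $(a,a)$ and $(b,b)$ are available to drive the rule in these cases. Combining the two inclusions then yields the claimed update formula.
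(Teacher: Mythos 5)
Your proof is correct and follows essentially the same route as the paper's: the trivial inclusion of the old pairs, soundness of the rule by concatenating the two witnessing paths through an inserted arc, and completeness by decomposing a new witnessing path at its first and last inserted arcs (whose tail is labeled $a$ and head is labeled $b$) to certify $(p,a)$ and $(b,q)$ in the old $\reach(V')$. You are in fact somewhat more explicit than the paper, which compresses the first/last-arc decomposition into one sentence and does not spell out the degenerate cases handled by the reflexive pairs $(a,a)$ and $(b,b)$.
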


\begin{proof}
Let $R_1$ be the set $\reach(V')$ for  $\g(X)$, $R_2$ be the set $\reach(V')$ for $\g(\alpha_{a,b}(X))$,
and $R$ be the set of pairs constructed in the lemma starting with $\reach(V')$ for  $\g(X)$.
Then, it holds $R_1\subseteq R$. Furthermore, the rule given in the lemma
obviously puts feasible pairs into $\reach(V')$
which implies $R\subseteq R_2$. It remains to show $R_2\subseteq R$.
Let $(c,d)\in R_2$. If $(c,d)\in R_1$ then $(c,d)\in R$ as mentioned
above. Thus, let $(c,d)\not\in R_1$. This implies that there is a vertex $u\in V'$
labeled by $c$  and  a vertex $v\in V'$
labeled by $d$  and $v$ is reachable from $u$ in $\g(\alpha_{a,b}(X))$.
Since  $\g(X)$ is a spanning subdigraph of $\g(\alpha_{a,b}(X))$ and
the vertex labels are not changed by the performed edge insertion operation,
there is a non-empty set $V_u\in V'$ of vertices
labeled by $c$  and there is a non-empty set $V_v\in V'$
of vertices  labeled by $d$ and no vertex of $V_v$ is reachable from
a vertex of $V_u$ in   $\g(X)$. By the definition of the edge
insertion operation we know that  in  $\g(X)$ there is a
vertex $u'$ labeled by $a$ and a vertex $v'$ labeled by $b$ such that
$u'$ is reachable from $u$ and $v$ is reachable from $v'$. Thus,
$(c,a)\in R_1$ and $(b,d)\in R_1$. Our rule given in the statement
leads to $(c,d)\in R$.
\end{proof}

For a disjoint partition of $V$ into acyclic
sets $V_1,\ldots,V_s$, let  ${\cal M}$ be the multi
set\footnote{We use the notion of a {\em multi set}, i.e., a set
that may have several equal elements. For a multi set with elements $x_1,\ldots,x_n$ we write
${\cal M}=\langle x_1,\ldots,x_n \rangle$.
There is no order on the elements of ${\cal M}$.
The number how often an element $x$ occurs in ${\cal M}$ is denoted by $\psi({\cal M},x)$.
Two multi sets ${\cal M}_1$ and ${\cal M}_2$
are {\em equal} if for each element $x \in {\cal M}_1 \cup {\cal M}_2$,
$\psi({\cal M}_1,x)=\psi({\cal M}_2,x)$, otherwise they are called
{\em different}. The empty multi set is denoted by $\langle \rangle$.}
$\langle \reach(V_1),\ldots,\reach(V_s) \rangle$.

Let $F(X)$ be the set of all mutually different multi sets
${\cal M}$ for all disjoint partitions of vertex set $V$ into acyclic sets.
Every multi set in $F(X)$ consists of  nonempty subsets of
$\{1,\ldots,k\} \times \{1,\ldots,k\}$. Each subset can occur
$0$ times and not more than $|V|$ times.
Thus, $F(X)$ has at most $$(|V|+1)^{2^{k^2}-1} \in  |V|^{2^{\bigo(k^2)}} $$ mutually different multi
sets
%\MR{improve bound?}
and is polynomially bounded in the size of $X$.

In order to give a dynamic programming solution along the recursive structure
of a directed clique-width $k$-expression, we show how
to compute $F(a(v))$, $F(X \oplus Y)$  from $F(X)$
and $F(Y)$, as well as $F(\alpha_{a,b}(X))$ and $F(\rho_{a \to b}(X))$ from $F(X)$.

\begin{lemma}\label{le1} Let $a,b\in\{1,\ldots,k\}$, $a\neq b$.
\begin{enumerate}
\item
$F(a(v))=\{ \langle \{(a,a)\} \rangle \}$.

\item
Starting with set $D=\{ \langle \rangle \} \times F(X) \times F(Y)$
extend $D$ by all triples that can be obtained from some
triple $({\cal M},{\cal M}',{\cal M}'') \in D$ by removing a set $L'$ from ${\cal M}'$
or a set $L''$ from ${\cal M}''$ and inserting it into ${\cal M}$, or
by removing both sets and inserting $L' \cup L''$ into ${\cal M}$.
Finally, we choose
$F(X \oplus Y)= \{ {\cal M} \mid  ({\cal M},\langle \rangle,\langle \rangle) \in D\}$.

\item
$F(\alpha_{a,b}(X))$ can be obtained from  $F(X)$ as follows.
First, we remove from  $F(X)$ all multi sets $\langle L_1, \ldots, L_s \rangle$
such that  $(b,a) \in L_t$ for some $1\leq t\leq s$.
%since these
%partitions are no longer feasible after the insertion of arcs from vertices labeled by
%$a$ to vertices labeled by $b$.
Afterwards, we modify every remaining multi set
$\langle L_1, \ldots, L_s \rangle$ in  $F(X)$ as follows:
\begin{itemize}
\item
For every $L_i$ which contains
a pair $(x,a)$ and a pair $(b,y)$, we  extend $L_i$ by $(x,y)$.
\end{itemize}

\item
$F(\rho_{a \to b}(X)) = \{ \langle \rho_{a \to b}(L_1), \ldots, \rho_{a \to b}(L_s) \rangle
\mid  \langle L_1,\ldots,L_s \rangle \in F(X) \}$, using the relabeling of a set of label pairs  $\rho_{a\to b}(L_i)=\{(\rho_{a\to b}(c),\rho_{a\to b}(d)) \mid (c,d)\in L_i\}$  and the relabeling
of integers $\rho_{a\to b}(c)=b$, if $c=a$ and $\rho_{a\to b}(c)=c$, if $c\neq a$.
\end{enumerate}
\end{lemma}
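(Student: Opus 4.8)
The plan is to verify each of the four cases of Lemma~\ref{le1} by showing that the set $F(\cdot)$ computed from the stated rule is exactly the set of all mutually different multi sets $\langle \reach(V_1),\ldots,\reach(V_s)\rangle$ arising from disjoint partitions of the vertex set into acyclic sets. For each operation I would argue two inclusions: every multi set produced by the rule corresponds to a genuine acyclic partition with the claimed reachability data, and conversely every acyclic partition of the constructed digraph yields a multi set that the rule produces. The key observation throughout is that the vertex set and which vertices may lie together in an acyclic class are governed by how the four operations affect reachability, and that Lemmas~\ref{le0} and~\ref{le0x} already pin down precisely this behaviour for the edge-insertion operation.

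For case~1, the single-vertex digraph $a(v)$ has exactly one partition, namely $\{v\}$ itself, which is trivially acyclic, and its reachability set is $\{(a,a)\}$ since $v$ reaches itself and carries label $a$; hence $F(a(v))=\{\langle\{(a,a)\}\rangle\}$. For case~4, the relabeling $\rho_{a\to b}$ changes no arcs and hence no reachability in the underlying digraph; it only renames label $a$ to $b$. Thus a vertex set is acyclic in $\g(\rho_{a\to b}(X))$ if and only if it was acyclic in $\g(X)$, so the partitions are in bijection, and each pair $(c,d)\in\reach(V_i)$ simply has its labels rewritten by $\rho_{a\to b}$ coordinatewise. This is exactly the stated relabeling, so the two sets coincide. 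Case~3 is essentially a repackaging of Lemmas~\ref{le0} and~\ref{le0x}: the operation $\alpha_{a,b}$ adds arcs but leaves the vertex set unchanged, so a partition $V_1,\ldots,V_s$ that was acyclic in $\g(X)$ stays acyclic in $\g(\alpha_{a,b}(X))$ precisely when no class $V_t$ contains the reachability pair $(b,a)$, by Lemma~\ref{le0}. For those surviving classes, Lemma~\ref{le0x} gives the exact update of the reachability set, which is the rule stated in the lemma. One subtlety I would flag is that after deletion of offending multi sets and application of the update, two formerly different multi sets might become equal; since $F$ is defined as a set of mutually different multi sets this causes no problem, but it should be noted that the rule is applied elementwise and duplicates are collapsed.

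The main obstacle is case~2, the disjoint union, where the combinatorial bookkeeping is heaviest. Here every acyclic class of $\g(X\oplus Y)$ splits uniquely into its part inside $V(X)$ and its part inside $V(Y)$; since $\oplus$ introduces no arcs between the two sides, a set $W=W_X\cup W_Y$ is acyclic in the union exactly when $W_X$ is acyclic in $\g(X)$ and $W_Y$ is acyclic in $\g(Y)$, and moreover $\reach(W)=\reach(W_X)\cup\reach(W_Y)$ because no path can cross between the two parts. The challenge is to check that the iterative procedure --- starting from $\langle\rangle\times F(X)\times F(Y)$ and repeatedly moving a set from ${\cal M}'$ or ${\cal M}''$ into ${\cal M}$, or merging one set from each side into their union before inserting --- generates exactly the multi sets of all such combined partitions, and nothing more. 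I would prove this by an induction on the number of classes moved into ${\cal M}$, arguing that the reachable triples $({\cal M},{\cal M}',{\cal M}'')$ are exactly those for which ${\cal M}$ already encodes a partial combined partition and ${\cal M}',{\cal M}''$ hold the yet-unassigned classes of the two sides; the terminal condition ${\cal M}'={\cal M}''=\langle\rangle$ then selects precisely the fully assembled acyclic partitions of $\g(X\oplus Y)$. The delicate point is the merge step: a class of the combined partition whose $\reach$ value is $L'\cup L''$ can arise either as a class that lives entirely on one side (handled by plain removal) or as the union of one class from each side, and one must verify that allowing the merge of $L'$ and $L''$ captures all ways an acyclic class of the union can draw vertices from both $V(X)$ and $V(Y)$ without over-generating, using that the union of an acyclic set in $\g(X)$ and an acyclic set in $\g(Y)$ is always acyclic in $\g(X\oplus Y)$ and has reachability set exactly the union of the two.
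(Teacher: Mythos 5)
Your proposal is correct and follows essentially the same route as the paper's proof: a case-by-case verification in which case~3 is reduced to Lemmas~\ref{le0} and~\ref{le0x}, case~2 is handled by observing that every acyclic set of $\g(X\oplus Y)$ decomposes into an acyclic part in $\g(X)$ and one in $\g(Y)$ (with reachability the union, since no arcs cross), and cases~1 and~4 follow from the trivial single-vertex partition and the fact that relabeling preserves arcs and reachability. Your additional details --- the explicit identity $\reach(W_X\cup W_Y)=\reach(W_X)\cup\reach(W_Y)$, the induction justifying the iterative triple procedure, and the remark on collapsing duplicate multi sets --- merely flesh out steps the paper states more tersely.
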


\begin{proof}
\begin{enumerate}
\item In $\g(a(v))$ there is exactly one vertex $v$ labeled by $a$ and thus the only partition of $V$ into
one acyclic set of the vertex set of $\g(a(v))$ is $\{v\}$. The corresponding multi set is
$\langle \reach(\{v\}) \rangle=\langle \{(a,a)\} \rangle$.

\item
$F(X \oplus Y) \subseteq \{ {\cal M} \mid  ({\cal M},\langle \rangle,\langle \rangle) \in D\}$:

Every acyclic set of  $\g(X \oplus Y)$ is either an acyclic set in $\g(X)$, or an
acyclic set in $\g(Y)$, or is the union of two acyclic sets from $\g(X)$ and $\g(Y)$.
All three possibilities are considered when computing $\{ {\cal M} \mid  ({\cal M},\langle \rangle,\langle \rangle) \in D\}$ from $F(X)$ and $F(Y)$.

$F(X \oplus Y) \supseteq \{ {\cal M} \mid  ({\cal M},\langle \rangle,\langle \rangle) \in D\}$:

Since the operation $\oplus$ does not create any new edges, the acyclic sets
from $\g(X)$, the acyclic sets
from $\g(Y)$, and the union of acyclic sets from $\g(X)$ and $\g(Y)$ remain acyclic sets for $\g(X \oplus Y)$.

\item By Lemma \ref{le0} we have to remove all multi sets $\langle L_1, \ldots, L_s \rangle$  from  $F(X)$
for which holds that  $(b,a) \in L_t$ for some $1\leq t\leq s$. The remaining multi sets are updated
correctly by Lemma \ref{le0x}.

\item In $\g(X)$ there is a vertex labeled by $d$ which is reachable
from a vertex  labeled by $c$ if and only if in
$\g(\rho_{a \to b}(X))$ there is a vertex labeled by $\rho_{a \to b}(d)$ which is reachable
from a vertex labeled by $\rho_{a\to b}(c)$.
\end{enumerate}
This shows the statements of the lemma.
\end{proof}

Since every possible coloring of $G$ is realized in the set $F(X)$, where $X$ is a
directed clique-width $k$-expression for 
$G$, it is easy to find a minimum coloring for $G$.

\begin{corollary} \label{cor3}
Let $G=(V,E)$ be a digraph given by a directed clique-width $k$-expression $X$.
There is a partition of $V$ into $r$  acyclic sets
if and only if there is some ${\cal M} \in F(X)$ consisting of $r$ sets of label pairs.
\end{corollary}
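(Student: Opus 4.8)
The plan is to read off the corollary directly from the definition of $F(X)$, using Lemma~\ref{le1} only to guarantee that the set computed along the expression-tree really is the defined set. By definition, $F(X)$ is precisely the collection of mutually different multi sets $\langle \reach(V_1),\ldots,\reach(V_s) \rangle$ ranging over all partitions of $V$ into acyclic sets $V_1,\ldots,V_s$, and Lemma~\ref{le1} ensures the recursively computed $F(X)$ coincides with this set. The one bridge I need to establish is that the number of label-pair sets appearing in such a multi set (counted with multiplicity) equals the number $s$ of parts of the underlying partition; once this is in place, both implications fall out immediately.

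First I would record the key observation: every part $V_i$ of a partition into acyclic sets is nonempty, so it contains some vertex carrying a label $a$, and since that vertex is reachable from itself we have $(a,a)\in\reach(V_i)$, whence $\reach(V_i)\neq\emptyset$. Therefore the multi set $\langle \reach(V_1),\ldots,\reach(V_s)\rangle$ has exactly $s$ elements when counted with multiplicity, even though two distinct parts may give rise to the same reach set (the multi-set notation $\langle\cdot\rangle$ precisely records these multiplicities via $\psi$). This identifies the number of label-pair sets in ${\cal M}$ with the number of acyclic parts. For the forward direction, a partition of $V$ into $r$ acyclic sets yields, by definition, a member of $F(X)$ that by the observation consists of exactly $r$ sets of label pairs. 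For the backward direction, any ${\cal M}\in F(X)$ consisting of $r$ sets of label pairs is, by the correctness of $F(X)$, the multi set associated with some partition of $V$ into acyclic sets; since its element count equals the number of parts, that partition has exactly $r$ acyclic parts. Combining both directions gives the claimed equivalence.

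I expect the only delicate point to be the bookkeeping of multiplicities, not the structural content. Because $F(X)$ retains only \emph{mutually different} multi sets, one must be careful that ``consisting of $r$ sets of label pairs'' is read \emph{within} a single multi set, with repeated reach sets counted according to their multiplicity, rather than as a count of distinct label-pair sets across $F(X)$. This is exactly what the nonemptiness of each $\reach(V_i)$ resolves, since it forces the element count of ${\cal M}$ to match the number of color classes of the corresponding acyclic coloring.
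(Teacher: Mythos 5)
Your proposal is correct and matches the paper's treatment: the paper states Corollary~\ref{cor3} without a separate proof, justifying it by the preceding remark that $F(X)$ by definition realizes every partition of $V$ into acyclic sets, which is exactly the argument you spell out. Your added observation that each $\reach(V_i)$ is nonempty (so the multiplicity count of ${\cal M}$ equals the number of parts) is a careful elaboration of the same reading-off-the-definition approach, consistent with the paper's own note that every multi set in $F(X)$ consists of nonempty subsets of $\{1,\ldots,k\}\times\{1,\ldots,k\}$.
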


\begin{theorem}\label{xp-dca}
The Dichromatic Number problem on digraphs on $n$ vertices given by a directed
clique-width $k$-expression can be solved
in $n^{2^{\bigo(k^2)}}$ time.
\end{theorem}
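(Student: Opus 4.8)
The plan is to establish the running time by combining the dynamic programming recurrences from Lemma~\ref{le1} with the size bound on $F(X)$ and then reading off the answer via Corollary~\ref{cor3}. First I would fix a directed clique-width $k$-expression $X$ for $G$ and take its expression-tree $T$, which has $\bigo(n)$ leaves (one per vertex) and $\bigo(n)$ inner nodes, since each of the operations $\oplus$, $\alpha_{a,b}$, $\rho_{a\to b}$, and vertex creation contributes a bounded number of nodes per vertex. For each node $t$ of $T$ the algorithm computes the set $F(X_t)$ associated with the subexpression $X_t$ rooted at $t$, proceeding bottom-up. The correctness of this computation is exactly what Lemma~\ref{le1} provides for the four operation types, and the base case $F(a(v))$ is immediate.

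Next I would bound the work at each node. The key quantity is $|F(X_t)| \le (|V|+1)^{2^{k^2}-1} \in n^{2^{\bigo(k^2)}}$, as established in the text just before Lemma~\ref{le1}: every multi set in $F(X_t)$ is determined by how many times each of the at most $2^{k^2}-1$ nonempty subsets of $\{1,\ldots,k\}\times\{1,\ldots,k\}$ occurs, and each count lies in $\{0,\ldots,|V|\}$. For the unary operations $\alpha_{a,b}$ and $\rho_{a\to b}$, processing a single multi set costs time polynomial in $k$ and its number of component sets (at most $n$), so the total cost at such a node is $|F(X_t)|$ times a polynomial factor, hence again $n^{2^{\bigo(k^2)}}$. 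The expensive step is the disjoint union: I would examine the construction of $D$ in part~2 of Lemma~\ref{le1}, where triples $({\cal M},{\cal M}',{\cal M}'')$ are built by repeatedly moving or merging component sets out of ${\cal M}'$ and ${\cal M}''$ into ${\cal M}$. Starting from $\{\langle\rangle\}\times F(X)\times F(Y)$, which already has size at most $|F(X)|\cdot|F(Y)| \in n^{2^{\bigo(k^2)}}$, and noting that each intermediate triple is itself described by three multi sets over the same bounded universe, the number of reachable triples stays within $n^{2^{\bigo(k^2)}}$; the merging process terminates after at most $n$ transfers since the total number of component sets only decreases.

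Summing over all $\bigo(n)$ nodes multiplies the per-node bound by a factor of $n$, which is absorbed into $n^{2^{\bigo(k^2)}}$. Finally, Corollary~\ref{cor3} says that $G$ admits an acyclic $r$-coloring if and only if some ${\cal M}\in F(X)$ consists of exactly $r$ sets, so scanning the root set $F(X)$ once and recording the minimum number of component sets over all its elements yields $\vec{\chi}(G)$ within the same time bound. The main obstacle I anticipate is the disjoint-union step: one must argue carefully that the iterative expansion of $D$ neither escapes the universe of multi sets over nonempty label-pair subsets nor produces more than $n^{2^{\bigo(k^2)}}$ distinct triples, so that the fixpoint computation of $D$ remains polynomially bounded for fixed $k$. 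Everything else is bookkeeping on top of the correctness already guaranteed by Lemmas~\ref{le0}, \ref{le0x}, and~\ref{le1}.
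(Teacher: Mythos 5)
Your overall strategy coincides with the paper's own proof: bottom-up dynamic programming of the sets $F(X_t)$ along the expression-tree, correctness supplied by Lemma~\ref{le1}, the bound $|F(X_t)|\leq (n+1)^{2^{k^2}-1}\in n^{2^{\bigo(k^2)}}$, a per-node cost of $n^{2^{\bigo(k^2)}}$ (with the disjoint union handled essentially as the paper does, by observing that $D$ lives in a universe of at most $(n+1)^{3(2^{k^2}-1)}$ triples of multi sets and that the closure computation terminates), and reading off $\vec{\chi}(G)$ at the root via Corollary~\ref{cor3}.

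There is, however, one genuine gap: your claim that the expression-tree $T$ has $\bigo(n)$ nodes because ``each of the operations contributes a bounded number of nodes per vertex.'' This is false for an arbitrary directed clique-width $k$-expression. Only the vertex creations (exactly $n$ leaves) and the disjoint unions (exactly $n-1$ binary inner nodes) are tied to vertices; the unary operations $\alpha_{a,b}$ and $\rho_{a\to b}$ can be repeated arbitrarily often (for instance alternating $\rho_{1\to 2}$ and $\rho_{2\to 1}$, or re-applying the same $\alpha_{a,b}$), so the length of $X$ is not bounded by any function of $n$ and $k$. Since the theorem claims a running time depending only on $n$ and $k$, and not on $|X|$, the number of unary nodes must be controlled. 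The paper does this by invoking the normal form of \cite{EGW03} (whose Theorem 4.2 carries over to directed expressions): one may assume that between two consecutive disjoint unions all edge insertions precede all relabelings, which bounds the operations by $n-1$ disjoint unions, at most $(n-1)(k-1)$ relabelings, and at most $(n-1)k(k-1)$ edge insertions, i.e., $\bigo(nk^2)$ nodes in total. With that normalization (or any equivalent preprocessing eliminating redundant unary operations) your summation over the nodes goes through and yields $n^{2^{\bigo(k^2)}}$; without it, your argument only gives a bound of the form $|X|\cdot n^{2^{\bigo(k^2)}}$, which is not what the theorem states.
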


\begin{proof}
Let $G=(V,E)$  be a digraph of directed clique-width at most $k$ and $T$ be a $k$-expression-tree
for $G$ with root $w$.
For some vertex $u$ of $T$ we denote by $T_u$
the subtree rooted at $u$ and $X_u$ the $k$-expression defined by $T_u$.
In order to solve the Dichromatic Number problem for $G$,
we traverse $k$-expression-tree  $T$ in a bottom-up order.
For every vertex $u$ of $T$ we compute $F(X_u)$ following the rules
given in Lemma \ref{le1}. By Corollary \ref{cor3} we can solve our
problem by $F(X_w)=F(X)$.

Our rules given Lemma \ref{le1} show the following running times.
For every $v\in V$ and  $a\in\{1,\ldots,k\}$ set $F(a(v))$
can be computed in $\bigo(1)$.
The set $F(X \oplus Y)$ can be computed
in time $(n+1)^{3(2^{k^2}-1)}\in n^{2^{\bigo(k^2)}}$ from $F(X)$ and $F(Y)$.
The sets $F(\alpha_{a,b}(X))$ and 
$F(\rho_{a \to b}(X))$  can be computed
in time $(n+1)^{2^{k^2}-1}\in n^{2^{\bigo(k^2)}}$ from $F(X)$.

In order to bound the number and order of operations within directed clique-width expressions,
we can use the normal form for clique-width expressions defined in  \cite{EGW03}.
The proof of Theorem 4.2 in \cite{EGW03} shows that also for directed  clique-width expression $X$,
we can assume that for every subexpression, after a disjoint union operation
first there is a sequence of edge insertion operations followed by a sequence of
relabeling operations, i.e. between two disjoint union operations there is no relabeling before
an edge insertion. Since there are $n$ leaves in $T$, we have $n-1$
disjoint union operations, at most $(n-1)\cdot (k-1)$ relabeling operations,
and at most $(n-1)\cdot k(k-1)$ edge insertion  operations.
This leads to an overall running time of $n^{2^{\bigo(k^2)}}$.
\end{proof}

\begin{example}
We consider $X=(\alpha_{1,2}(\rho_{2\to 1}(\alpha_{2,1}(\alpha_{1,2}(1(v_1)\oplus 2(v_2))))\oplus 2(v_3))$.
\begin{itemize}
\item
$F(1(v_1))=\{ \langle \{(1,1)\} \rangle \}$

\item
$F(2(v_2))=\{ \langle \{(2,2)\} \rangle \}$

\item
$F(1(v_1)\oplus 2(v_2))=\{ \langle \{(1,1)\},\{(2,2)\} \rangle, \langle \{(1,1),(2,2)\} \rangle \}$

\item
$F(\alpha_{1,2}(1(v_1)\oplus 2(v_2)))=\{ \langle \{(1,1)\},\{(2,2)\} \rangle, \langle \{(1,1),(2,2),(1,2)\} \rangle \}$

\item
$F(\alpha_{2,1}(\alpha_{1,2}(1(v_1)\oplus 2(v_2))))=\{ \langle \{(1,1)\},\{(2,2)\} \rangle\}$

\item
$F(\rho_{2\to 1}(\alpha_{2,1}(\alpha_{1,2}(1(v_1)\oplus 2(v_2)))))=\{ \langle \{(1,1)\},\{(1,1)\} \rangle\}$

\item
$F(\rho_{2\to 1}(\alpha_{2,1}(\alpha_{1,2}(1(v_1)\oplus 2(v_2))))\oplus 2(v_3))=\{ \langle \{(1,1)\},\{(2,2)\},\{(1,1)\} \rangle ,\langle \{(1,1),(2,2)\},$ $\{(1,1)\} \rangle \}$

\item
$F(\alpha_{1,2}(\rho_{2\to 1}(\alpha_{2,1}(\alpha_{1,2}(1(v_1)\oplus 2(v_2))))\oplus 2(v_3)))=\{ \langle \{(1,1)\},\{(2,2)\},\{(1,1)\} \rangle,$ $\langle \{(1,1),(2,2),(1,2)\},$ $\{(1,1)\} \rangle \}$

\end{itemize}

Thus, in $F(X)$ there is one multi set consisting of two  sets of label pairs 
and one multi set consisting of three  sets of label pairs. This implies that
$\vec{\chi}(\g(X))=2$.
\end{example}

The running time shown in Theorem \ref{xp-dca} leads to the following result.

\begin{corollary}\label{xp-dc}
The  Dichromatic Number problem  is in $\xp$ when parameterized by directed clique-width.
\end{corollary}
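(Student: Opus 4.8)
The plan is to read the corollary directly off the running time established in Theorem~\ref{xp-dca}. The definition of $\xp$ recalled earlier asks for an algorithm whose running time is polynomial in the input size once the parameter is held fixed. Here the parameter is the directed clique-width $k$, and for any fixed $k$ the exponent $2^{\bigo(k^2)}$ is a constant; consequently $n^{2^{\bigo(k^2)}}$ is a polynomial in $n$ whose degree depends only on $k$. This is precisely the shape required for membership in $\xp$, so the bulk of the work is already done by the preceding theorem together with the dynamic programming of Lemma~\ref{le1} and Corollary~\ref{cor3}.

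The one point that needs care is that Theorem~\ref{xp-dca} presupposes a directed clique-width $k$-expression supplied alongside the digraph, whereas the parameterized problem is parameterized only by the width value. First I would close this gap by invoking a procedure that, given a digraph of directed clique-width at most $k$, constructs a directed clique-width expression of width bounded by some function $f(k)$ in time polynomial in $n$ for every fixed $k$. Feeding such an expression into Theorem~\ref{xp-dca} yields a running time of the form $n^{2^{\bigo(f(k)^2)}}$, which is again polynomial in $n$ for fixed $k$, so the composed procedure still witnesses membership in $\xp$.

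The hard part is therefore not the reachability-based dynamic programming itself---that is entirely handled by Theorem~\ref{xp-dca}---but rather guaranteeing that an expression is available, since computing directed clique-width exactly is intractable. I expect the cleanest route is to rely on an approximation result for directed clique-width analogous to the Oum--Seymour approximation in the undirected setting, or, failing that, to phrase the corollary for instances presented together with a $k$-expression. Once the expression is in hand, the corollary follows immediately from the polynomial bound on the running time for each fixed value of $k$.
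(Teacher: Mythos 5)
Your proposal is correct and follows essentially the same route as the paper: Corollary~\ref{xp-dc} is read directly off the $n^{2^{\bigo(k^2)}}$ bound of Theorem~\ref{xp-dca}, which is polynomial in $n$ for each fixed $k$. Your concern about obtaining an expression is also resolved exactly as the paper does in its remarks following the corollary, namely by using approximate directed clique-width expressions derived from bi-rank-width decompositions \cite{KR13}, \cite[Lemma 9.9.12]{BG18}, accepting a single-exponential blow-up in the parameter, which keeps the algorithm in $\xp$.
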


Up to now there are only very few digraph classes for which we can compute a directed clique-width expression in
polynomial time. This holds for directed co-graphs, digraphs of bounded directed modular width, 
orientations of trees, and directed cactus forests.
For such classes we can apply the result of Theorem~\ref{xp-dca}.
In order to find directed clique-width expressions for general
digraphs one can use results on the related parameter bi-rank-width \cite{KR13}.
By \cite[Lemma 9.9.12]{BG18} we can use approximate directed clique-width
expressions obtained from rank-decomposition with the
drawback of a  single-exponential blow-up on the parameter.

Next, we give a lower bound for the running time of parameterized
algorithms for Dichromatic Number  problem  parameterized by
the directed clique-width.

\begin{corollary}
The Dichromatic Number  problem on digraphs on $n$ vertices parameterized by
the directed clique-width $k$ cannot be solved in time $n^{2^{o(k)}}$, unless ETH fails.
\end{corollary}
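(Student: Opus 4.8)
The plan is to transfer a known ETH-based lower bound for the Chromatic Number problem parameterized by clique-width through exactly the reduction already used in the proof of Corollary~\ref{hpcw}. Fomin, Golovach, Lokshtanov, and Saurabh~\cite{FGLS10a} established not only that the Chromatic Number problem is $\w[1]$-hard when parameterized by clique-width, but also the refined statement that it admits no algorithm running in time $n^{2^{o(k)}}$ on $n$-vertex graphs of clique-width $k$, unless ETH fails; this matches the $n^{2^{O(k)}}$-type dynamic-programming upper bounds for that problem. My goal is to show that any hypothetical $n^{2^{o(k)}}$-time algorithm for the Dichromatic Number problem parameterized by directed clique-width would yield an algorithm of the same form for Chromatic Number, contradicting ETH.

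First I would recall the reduction from Corollary~\ref{hpcw}: an instance $(G,r)$ of the Chromatic Number problem, with $G=(V,E)$ undirected, is mapped to the instance $(\overleftrightarrow{G},r)$ of the Dichromatic Number problem. This reduction is computable in polynomial (indeed linear) time and has the two properties that matter here. By Observation~\ref{obs-dicol}, $G$ admits an $r$-coloring if and only if $\overleftrightarrow{G}$ admits an acyclic $r$-coloring, so the reduction is correct. Moreover, it is simultaneously size- and parameter-preserving in the precise sense required: the complete biorientation $\overleftrightarrow{G}$ has exactly $|V|=n$ vertices, and by~\cite{GWY16} the directed clique-width of $\overleftrightarrow{G}$ equals the clique-width of $G$. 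Hence both the instance size $n$ and the width parameter $k$ are carried over unchanged.

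With these two invariants in hand, the conclusion is immediate. Suppose, for contradiction, that the Dichromatic Number problem on $n$-vertex digraphs of directed clique-width $k$ could be solved in time $n^{2^{o(k)}}$. Applying the reduction above to an arbitrary instance $(G,r)$ of Chromatic Number and then running this hypothetical algorithm on $(\overleftrightarrow{G},r)$ would decide $r$-colorability of $G$ in time $n^{2^{o(k)}}$, where $n$ and $k$ are precisely the vertex count and clique-width of $G$. This contradicts the lower bound of~\cite{FGLS10a} under ETH, which proves the corollary.

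The one point that genuinely requires care—and thus the main obstacle—is that the source bound is stated in the fine-grained $n^{2^{o(k)}}$ form rather than merely as $\w[1]$-hardness, so the reduction must keep $n$ and $k$ fixed simultaneously, not just up to some function of the parameter. This is exactly why the biorientation reduction is the right tool: it leaves the vertex set untouched (so $n$ is literally unchanged) and, via~\cite{GWY16}, sends clique-width to directed clique-width with no increase, so the fine-grained form of the bound propagates verbatim and no slack is introduced in either quantity.
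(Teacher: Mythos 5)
Your proposal is correct and follows essentially the same route as the paper: transfer the ETH-based lower bound for the Chromatic Number problem parameterized by clique-width through the complete-biorientation reduction, using Observation~\ref{obs-dicol} and the fact from \cite{GWY16} that the directed clique-width of $\overleftrightarrow{G}$ equals the clique-width of $G$, both of which preserve $n$ and $k$ exactly. The only correction needed is the citation: the fine-grained $n^{2^{o(k)}}$ lower bound for Chromatic Number is due to Fomin, Golovach, Lokshtanov, Saurabh, and Zehavi \cite{FGLSZ18} (``Clique-width III''), not to \cite{FGLS10a}, which establishes only the $\w[1]$-hardness used in Corollary~\ref{hpcw}.
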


\begin{proof}
In order to show the statement we apply the following lower bound for the
Chromatic Number problem parameterized by clique-width given in \cite{FGLSZ18}.
Any algorithm for the Chromatic Number problem parameterized by clique-width
with running in $n^{2^{o(k)}}$  would disprove the Exponential Time Hypothesis.
By  Observation \ref{obs-dicol} and
since for every undirected graph $G$ its clique-width equals the directed
clique-width of $\overleftrightarrow{G}$ \cite{GWY16}, any algorithm
for the  Dichromatic Number  problem  parameterized by
the directed clique-width can be used to solve the
Chromatic Number problem parameterized by clique-width.
\end{proof}

In order to show  fixed parameter tractability for  $\DCN_{r}$ w.r.t.\ the
parameter directed clique-width one can use its defineability within monadic second order logic (MSO).
We restrict to $\MSOA$-logic, which allows propositional logic,
variables for vertices and vertex sets of digraphs, the predicate $\arc(u,v)$ for arcs of digraphs, and
quantifications over vertices and vertex sets \cite{CE12}.
For defining optimization problems we use the $\LMSOA$  framework given in \cite{CMR00}.

The following theorem is from \cite[Theorem 4.2]{GHKLOR14}.

\begin{theorem}[\cite{GHKLOR14}]\label{th-ghk}
For every integer $k$ and $\MSOA$  formula $\psi$, every $\psi$-$\LMSOA$  optimization problem
is fixed-parameter tractable on digraphs of clique-width $k$, with the parameters $k$ and  length of the formula $|\psi|$.
\end{theorem}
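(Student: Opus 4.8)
The plan is to establish this as a Courcelle-type meta-theorem, reducing the evaluation and optimization of the $\MSOA$ formula to a bottom-up computation on the $k$-expression-tree. First I would observe that a digraph of directed clique-width at most $k$ is given together with a $k$-expression-tree $T$ whose leaves are the vertex-creation operations $a(v)$ and whose internal nodes are the disjoint union (binary), the edge insertion $\alpha_{a,b}$ (unary), and the relabeling $\rho_{a\to b}$ (unary). Since $\MSOA$ quantifies only over vertices and vertex sets and accesses the structure solely through the predicate $\arc(u,v)$, the vertex labels from $\{1,\ldots,k\}$ can be encoded as $k$ unary predicates, so that each subgraph $G_t$ associated with a node $t$ of $T$ is a relational structure over a fixed finite signature.

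The technical core is a Feferman--Vaught-style composition argument. Fix the quantifier rank $q$ of $\psi$; up to $\MSOA$-equivalence of quantifier rank $q$ there are only finitely many types over the signature with $k$ label predicates, and this number is bounded by a function of $k$ and $|\psi|$ alone. I would then show that for each operation the $\MSOA$-type of the resulting labeled digraph is a computable function of the types of its operands: for the disjoint union this is the classical composition theorem, while the unary operations $\alpha_{a,b}$ and $\rho_{a\to b}$ are $\MSOA$-definable transductions --- the arc relation and the label predicates of the output are $\MSOA$-definable in terms of those of the input --- so the output type is determined by the input type. These finitely many composition functions can be precomputed, yielding a deterministic bottom-up tree automaton over $T$ that decides $\psi$ in time linear in $n$ once $T$ is given.

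To handle the $\LMSOA$ optimization I would augment the automaton to carry optimal values. For free set variables $X_1,\ldots,X_m$ and a linear evaluation function over them, at each node $t$ and for each relevant type I would store the best partial value attainable by an assignment inducing that type in $G_t$; the composition step combines these values additively, subject to the type-compatibility dictated by $\psi$, and a single bottom-up pass returns the optimum at the root. The main obstacle is the composition step for the unary operations: the edge insertion $\alpha_{a,b}$ adds arcs globally between all label-$a$ and all label-$b$ vertices, and one must verify that this global transformation is $\MSOA$-definable and that types compose accordingly (the relabeling being the easier case). One must also bound the automaton so the total running time is $f(k,|\psi|)\cdot \bigo(n)$, where the dependence of $f$ on $|\psi|$ reflects the unavoidable non-elementary blow-up inherent to MSO model checking.
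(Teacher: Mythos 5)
The paper offers no proof of this statement --- it is imported verbatim as Theorem 4.2 of \cite{GHKLOR14}, whose proof rests on the Courcelle--Makowsky--Rotics argument of \cite{CMR00} --- and your sketch correctly reconstructs essentially that argument: Feferman--Vaught composition of bounded-quantifier-rank $\MSOA$-types under disjoint union, treatment of $\alpha_{a,b}$ and $\rho_{a\to b}$ as quantifier-free definable transductions whose output type is determined by the input type, a bottom-up tree automaton over the $k$-expression-tree, and augmenting each type with optimal partial values for the $\LMSOA$ objective. The only caveat worth adding is that the theorem, as stated and as applied in this paper, presupposes that a directed clique-width $k$-expression is supplied with the input (or approximated, e.g.\ via bi-rank-width, as the paper notes elsewhere), since computing such an expression is not known to be fixed-parameter tractable.
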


Next, we will apply this result to $\DCN$.

\begin{theorem}\label{fpt-cw-r}
The Dichromatic Number problem
is in $\fpt$ when parameterized by directed clique-width and $r$.
\end{theorem}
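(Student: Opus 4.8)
The plan is to reduce the statement to Theorem~\ref{th-ghk} by encoding, for each fixed $r$, the property ``$G$ admits an acyclic $r$-coloring'' as a single $\MSOA$ formula $\psi_r$ whose length depends only on $r$. Once this is done, Theorem~\ref{th-ghk} (applied with a trivial objective, so that the $\LMSOA$ optimization problem degenerates to plain model checking) yields an algorithm that is fixed-parameter tractable in the directed clique-width $k$ and in $|\psi_r|$; since $|\psi_r|$ is bounded by a function of $r$ alone, this gives fixed-parameter tractability in the combined parameter $(k,r)$, which is exactly the claim.

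The crux is to write acyclicity of an induced subdigraph in $\MSOA=\text{MSO}_1$, where we may quantify over vertices and \emph{vertex} sets but not over arc sets; hence we cannot simply assert the existence of a cyclic arc sequence. Instead I would use the following vertex-set characterization: a set $U$ induces an acyclic subdigraph of $G$ if and only if there is no nonempty $W\subseteq U$ in which every vertex has an out-neighbor inside $W$. Indeed, if such a $W$ exists then, $W$ being finite, following out-arcs inside $W$ must eventually revisit a vertex and thereby close a directed cycle in $G[W]\subseteq G[U]$; conversely, the vertex set of any directed cycle in $G[U]$ is such a $W$. This is readily expressed as
\[
\mathrm{acyc}(U)\ \equiv\ \neg\,\exists W\Big[\,\exists x\,(x\in W)\ \wedge\ \forall u\big(u\in W\to u\in U\big)\ \wedge\ \forall u\big(u\in W\to\exists w(w\in W\wedge\arc(u,w))\big)\Big],
\]
which uses only vertex and vertex-set quantifiers together with the predicate $\arc$.

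It then remains to assemble $\psi_r$. Using the partition predicate
\[
\mathrm{part}(V_1,\dots,V_r)\ \equiv\ \forall v\Big(\bigvee_{i=1}^{r} v\in V_i\Big)\ \wedge\ \bigwedge_{1\le i<j\le r}\neg\,\exists v\,(v\in V_i\wedge v\in V_j),
\]
I set
\[
\psi_r\ \equiv\ \exists V_1\cdots\exists V_r\Big[\mathrm{part}(V_1,\dots,V_r)\wedge\bigwedge_{i=1}^{r}\mathrm{acyc}(V_i)\Big].
\]
By Definition~\ref{def-dc} and the correctness of $\mathrm{acyc}$, $G\models\psi_r$ holds exactly when $G$ has an acyclic $r$-coloring. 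The formula $\psi_r$ contains $r$ set quantifiers, $\mathcal O(r^2)$ atoms in $\mathrm{part}$, and $r$ copies of the fixed-size subformula $\mathrm{acyc}$, so $|\psi_r|$ is a function of $r$ only. Applying Theorem~\ref{th-ghk} with the formula $\psi_r$ and parameters $k$ and $|\psi_r|$ therefore decides $\DCN_r$, and hence $\DCN$ under the combined parameterization by $(k,r)$, in $\fpt$ time.

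The main obstacle is precisely the $\text{MSO}_1$ (as opposed to $\text{MSO}_2$) expressibility of acyclicity: since arc sets are not available as objects of quantification, everything hinges on finding the vertex-only witness for a cycle given above; the remaining bookkeeping (the partition predicate, the length estimate, and invoking Theorem~\ref{th-ghk}) is routine.
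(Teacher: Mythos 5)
Your proposal is correct and takes essentially the same route as the paper: it expresses the existence of an acyclic $r$-coloring as an $\MSOA$ formula (a partition into $r$ sets together with a vertex-subset characterization of acyclicity) whose length is bounded by a function of $r$, and then invokes Theorem~\ref{th-ghk} with parameters $k$ and $|\psi_r|$. The only cosmetic difference is the acyclicity predicate: you use the standard characterization that every nonempty subset must contain a vertex with no out-neighbor inside it, whereas the paper asserts that every nonempty subset contains a vertex of out-degree $0$ or at least $2$, justified via chordless cycles; the two formulations are equivalent and both lie in $\MSOA$.
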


\begin{proof} Let $G=(V,E)$ be a digraph.
We can define $\DCN_{r}$ by an $\MSOA$ formula
$$\psi=\exists V_1,\ldots, V_r: \left( \text{Partition}(V,V_1,\ldots,V_r) \wedge \bigwedge_{1\leq i \leq r} \text{Acyclic}(V_i) \right)$$
with
$$
\begin{array}{lcl}\text{Partition}(V,V_1,\ldots,V_r)&=&\forall v\in V: ( \bigvee_{1\leq i\leq r}v\in V_i) \wedge \\
 && \nexists  v\in V: ( \bigvee_{i\neq j,~ 1\leq i,j \leq r}(v\in V_i\wedge v\in V_j))
\end{array}
$$
and
$$
\begin{array}{lcl}
\text{Acyclic}(V_i)&=&\forall V'\subseteq V_i, V'\neq \emptyset: \exists v\in V' (\odeg(v)=0 \vee \odeg(v)\geq 2)
\end{array} $$
For the correctness we note the following. For every induced cycle $V'$ in
$G$ it holds that for every vertex $v\in V'$ we have $\odeg(v)=1$ in $G$.
This does not hold for non-induced cycles. But since for every cycle $V''$ in $G$ 
there is a subset $V'\subseteq V''$, such that
$G[V']$ is a cycle,  we can verify by $\text{Acyclic}(V_i)$ whether $G[V_i]$ is acyclic.
%If there is some $V'\subseteq V_i$, $V'\neq \emptyset$,
%such that for every vertex $v\in V'$ it holds that $\odeg(v)=1$ in $G$, then $V'$ is a cycle in $G$.
%Further, if there is a cycle $V''$ in $G$, then there is a subset $V'\subseteq V''$, such that
%$G[V']$ is a cycle and thus, for every vertex $v\in V'$ it holds that $\odeg(v)=1$.
Since it holds that $|\psi|\in \bigo(r)$, the statement follows by the result of Theorem \ref{th-ghk}. 
\end{proof}

\begin{corollary}\label{fpt-dc}
For every  integer $r$ the $r$-Dichromatic Number problem
is in $\fpt$ when parameterized by directed clique-width.
\end{corollary}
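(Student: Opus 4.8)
The plan is to derive Corollary \ref{fpt-dc} directly from Theorem \ref{fpt-cw-r}. The key observation is that Theorem \ref{fpt-cw-r} establishes that the Dichromatic Number problem is in $\fpt$ when the parameter is the \emph{combination} of directed clique-width and $r$. For the $r$-Dichromatic Number problem $\DCN_r$, the integer $r$ is no longer part of the input but is instead a fixed constant. Consequently, the combined parameter ``directed clique-width $+$ $r$'' collapses to just the directed clique-width, up to an additive constant, and the two-parameter $\fpt$-result specializes to a single-parameter $\fpt$-result.

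More concretely, I would argue as follows. Fix an integer $r$. By Theorem \ref{fpt-cw-r}, there is an algorithm that, given a digraph $G$ on $n$ vertices together with its directed clique-width $k$ and the value $r$, decides in time $f(k,r)\cdot n^{\bigo(1)}$ whether $G$ admits an acyclic $r$-coloring, for some computable function $f$. Since $r$ is now a constant that is not part of the input, we may define $g(k) = f(k,r)$, which is again a computable function of the single parameter $k$. The very same algorithm then runs in time $g(k)\cdot n^{\bigo(1)}$, which is precisely the defining bound for membership in $\fpt$ with parameter $k = \dcws(G)$. Thus $\DCN_r$ is in $\fpt$ when parameterized by directed clique-width.

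The argument is essentially a reparameterization observation and involves no new combinatorial content, so I do not anticipate any genuine obstacle. The only point requiring a modicum of care is verifying that the dependence on $r$ in Theorem \ref{fpt-cw-r} is through a computable function and does not secretly hide a dependence on the input size $n$; this follows because the formula $\psi$ used in the proof of Theorem \ref{fpt-cw-r} has length $|\psi|\in\bigo(r)$, so for fixed $r$ the formula is fixed, and the running time guaranteed by Theorem \ref{th-ghk} depends only on $k$ and $|\psi|$ (both constants once $r$ is fixed) times a polynomial in $n$. Hence fixing $r$ legitimately absorbs the $r$-dependence into the parameter function, completing the reduction.
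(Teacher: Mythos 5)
Your proposal is correct and matches the paper's approach: the paper states Corollary \ref{fpt-dc} as an immediate consequence of Theorem \ref{fpt-cw-r}, exactly the specialization you describe, where fixing $r$ turns the combined parameter into directed clique-width alone and the MSO-based running time $f(k,|\psi|)\cdot n^{\bigo(1)}$ with $|\psi|\in\bigo(r)$ collapses to $g(k)\cdot n^{\bigo(1)}$. Your extra care in checking that the $r$-dependence enters only through the formula length is a sound (if implicit in the paper) verification of the same argument.
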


%%%%%%%%%%%%%%%%%%%%%%%%%%%%%%%%%%%%%%%%%%%%%%%%%%%%%%%%%%%%%%%%%%%%%%%%%%
%%%%%%%%%%%%%%%%%%%%%%%%%%%%%%%%%%%%%%%%%%%%%%%%%%%%%%%%%%%%%%%%%%%%%%%%%%
\section{Conclusions and outlook}
%%%%%%%%%%%%%%%%%%%%%%%%%%%%%%%%%%%%%%%%%%%%%%%%%%%%%%%%%%%%%%%%%%%%%%%%%%
%%%%%%%%%%%%%%%%%%%%%%%%%%%%%%%%%%%%%%%%%%%%%%%%%%%%%%%%%%%%%%%%%%%%%%%%%%

The presented methods allow us to compute the dichromatic number
on special classes of digraphs in polynomial time.

The shown parameterized solutions of Corollary \ref{xp-dc} and Theorem \ref{fpt-cw-r}
also hold for any parameter which is
larger or equal than directed clique-width, such as the parameter directed modular width  \cite{SW20} (which
even allows an $\fpt$-algorithm by \cite{SW19,SW20})
and directed linear clique-width \cite{GR19c}. Furthermore, restricted to
semicomplete digraphs the shown parameterized solutions also hold
for directed path-width~\cite[Lemma 2.14]{FP19}.

Further, the hardness result of Corollary \ref{hpcw} rules out
 $\fpt$-algorithms for the Dichromatic
Number problem parameterized by width parameters
 which can be bounded by directed clique-width. Among these are
the clique-width and rank-width of the underlying undirected
graph, which also have been considered in \cite{Gan09} on the Oriented Chromatic
Number problem.

\begin{corollary}\label{cor-fpt-un}
The   Dichromatic Number problem  is not in $\fpt$ when parameterized by clique-width of the underlying undirected graph, unless  $\p=\np$.
\end{corollary}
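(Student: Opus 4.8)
The statement as worded rests only on $\p\neq\np$, so it is strictly stronger than what the $\w[1]$-hardness of Corollary \ref{hpcw} delivers (which yields non-membership in $\fpt$ merely under $\fpt\neq\w[1]$). I would therefore prove it by establishing \emph{para-$\np$-hardness}: exhibiting an absolute constant $c$ for which $\DCN$ stays $\np$-hard on the class of digraphs $G$ with $\cws(\un(G))\leq c$. This indeed gives the corollary. If $\DCN$ were in $\fpt$ for the parameter $\cws(\un(G))$, with running time $f(\cws(\un(G)))\cdot n^{\bigo(1)}$, then on the restricted instances this would be $f(c)\cdot n^{\bigo(1)}$, i.e.\ polynomial in $n$, so an $\np$-hard problem would admit a polynomial-time algorithm and $\p=\np$ would follow. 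The whole difficulty is thus concentrated in producing hard instances of constant underlying clique-width, which the $\w[1]$ reduction of Corollary \ref{hpcw} cannot supply (it uses symmetric digraphs $\overleftrightarrow{G}$, on which $\DCN$ reduces to Chromatic Number and hence lies in $\xp$ for clique-width, ruling out hardness at a fixed value).

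To obtain such instances I would take the $\np$-hardness of $\DCN_2$ from \cite{MSW19} (which already holds for digraphs of bounded directed feedback vertex set size and bounded out-degeneracy) as the source of intractability and realise the reduction on an underlying graph whose clique-width is an absolute constant. The guiding principle is that the hardness of $\DCN$ is carried by the \emph{directed} cycle structure, not by the undirected skeleton: one places the colour-forcing arcs on top of a skeleton of constant clique-width -- for example a complete bipartite or co-graph-like graph -- using symmetric arcs (digons), each of which forces its two endpoints into distinct colour classes in any acyclic $2$-colouring, to encode the proper-colouring constraints, and using short directed cycles (such as oriented triangles forbidding monochromatic triples) as the remaining gadgets. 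Because $\cws(\un(G))\leq\dcws(G)$ for every digraph $G$, the resulting family simultaneously witnesses hardness for the larger parameter directed clique-width, consistently with Corollary \ref{hpcw}.

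The decisive obstacle is to keep $\cws(\un(G))$ bounded while retaining $\np$-hardness, and it has two sides. The reduction cannot be routed through a grid-like variable--clause incidence graph, since those have unbounded clique-width; but it equally cannot yield an underlying graph of bounded \emph{tree-width}, because any digraph $G$ with $\tws(\un(G))\leq t$ satisfies $\vec{\chi}(G)\leq\chi(\un(G))\leq t+1$, so the number of colours is bounded and $\DCN$ becomes polynomial-time solvable by expressing $\DCN_r$ in $\MSOA$ (as in Theorem \ref{fpt-cw-r}) and invoking Courcelle's theorem. Hence a successful reduction must live strictly in the gap between tree-width and clique-width, on a dense, biclique-rich skeleton carrying an intricate orientation. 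Verifying that the directed-cycle gadgets encoding an $\np$-complete problem can in fact be laid out on such a constant-clique-width skeleton -- without inadvertently creating either unbounded clique-width or bounded tree-width -- is the crux of the argument.
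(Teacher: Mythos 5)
You correctly identified the logical mismatch, but you misjudged how the paper argues and your replacement argument is not carried out. The paper's own proof of this corollary is a one-line parameter transfer from Corollary \ref{hpcw}: since $\cws(\un(G))\leq \dcws(G)$ for every digraph $G$ --- and indeed the hard instances of Corollary \ref{hpcw} are symmetric digraphs $\overleftrightarrow{G}$ with $\dcws(\overleftrightarrow{G})=\cws(G)$ \cite{GWY16} and $\un(\overleftrightarrow{G})=G$ --- the reduction from Chromatic Number parameterized by clique-width \cite{FGLS10a} is verbatim a parameterized reduction for the parameter $\cws(\un(G))$, so an $\fpt$-algorithm for this smaller parameter would give $\fpt=\w[1]$. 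You are right that, read literally, this establishes the corollary only under $\fpt\neq\w[1]$, not under the stated hypothesis $\p\neq\np$; the ``unless $\p=\np$'' is the blanket assumption of Table \ref{tab}, and your observation that it outstrips what the $\w[1]$-transfer delivers is a fair criticism of the paper's phrasing.

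However, what you offer instead is a plan, not a proof. You reduce the corollary to the claim that $\DCN$ remains $\np$-hard on digraphs with $\cws(\un(G))\leq c$ for some fixed constant $c$ (para-$\np$-hardness), sketch digon and oriented-triangle gadgets on a ``co-graph-like skeleton'', and then explicitly concede that verifying the construction ``is the crux of the argument''. Nothing in the proposal discharges that crux: no source problem is fixed, no reduction is constructed, and no bound on the clique-width of the produced underlying graph is proved. The sketch is also underdetermined in a substantive way: the digons by themselves only encode proper colorability of the skeleton, which is polynomial-time solvable at fixed clique-width \cite{KR03,EGW01a}, so all hardness must be injected through the asymmetric part, and no correctness argument for this is given; appealing to \cite{MSW19} does not help either, since those hard instances have bounded feedback vertex set number but no bound on underlying clique-width. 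Your plan is not hopeless --- it is known in the literature, though neither cited by you nor contained in this paper, that deciding $2$-colorability is $\np$-hard already for tournaments, whose underlying graph $K_n$ has clique-width $2$, which would yield the literal statement --- but as written the proposal replaces the paper's complete (if more weakly phrased) $\w[1]$-transfer argument with an unproven stronger claim, so it has a genuine gap.
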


From a parameterized point of view width parameters are so-called structural parameters, which
are measuring the difficulty of decomposing a graph into a special tree-structure.
Beside these, the standard parameter, i.e.\ the threshold value given in the instance,
is well studied. Unfortunately, for the Dichromatic Number problem the standard parameter
is the number of necessary colors $r$ and
does even not allow an $\xp$-algorithm, since $\DCN_{2}$ is NP-complete  \cite{MSW19}.

\begin{corollary}\label{cor-xp-r}
The   Dichromatic Number problem  is not in $\xp$ when parameterized by $r$, unless  $\p=\np$.
\end{corollary}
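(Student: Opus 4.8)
The plan is to exploit the elementary relationship between membership in $\xp$ for a parameterized problem and the existence of polynomial-time algorithms for each fixed value of the parameter. Recall that a parameterized problem lies in $\xp$ precisely when, for every fixed value of the parameter, there is an algorithm deciding the problem in time polynomial in the input size (where the degree of the polynomial may depend on the parameter). I would begin by instantiating this definition for the parameter $r$: if the Dichromatic Number problem were in $\xp$ when parameterized by $r$, then for each constant value of $r$ there would be a polynomial-time decision procedure, that is, $\DCN_{r}$ would be solvable in polynomial time for every fixed $r$.

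The crux of the argument is then to contradict this with a single fixed value of the parameter. I would fix $r=2$ and invoke the known hardness of $\DCN_{2}$: by \cite{FHM03} (see also \cite{MSW19}), deciding whether a digraph admits an acyclic $2$-coloring is $\np$-complete. An $\xp$-algorithm parameterized by $r$ would, upon restricting to $r=2$, yield exactly a polynomial-time algorithm for this $\np$-complete problem, forcing $\p=\np$. Hence, under the assumption $\p\neq\np$, the Dichromatic Number problem cannot be in $\xp$ when parameterized by $r$, which is the claimed statement.

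There is essentially no technical obstacle here; the entire content is the observation that the standard parameter $r$ is already fixed to a constant in the $\np$-complete special case $\DCN_{2}$, so no nontrivial parameterized lower bound machinery (such as the $\w[1]$-hardness reductions used for directed clique-width in Corollary \ref{hpcw}) is needed. The only point requiring a word of care is the correct reading of the $\xp$ definition: I would make explicit that $\xp$-membership is equivalent to uniform polynomial-time solvability at every fixed parameter value, so that specializing to $r=2$ is legitimate. Given that, the corollary follows immediately from the $\np$-completeness of $\DCN_{2}$.
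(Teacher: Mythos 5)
Your proof is correct and follows exactly the paper's argument: the paper justifies this corollary by noting that the standard parameter $r$ "does even not allow an $\xp$-algorithm, since $\DCN_{2}$ is NP-complete," which is precisely your specialization to the fixed value $r=2$. No difference in approach; your write-up merely spells out the definition of $\xp$ more explicitly.
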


A positive result can be obtained for parameter ''number of vertices'' $n$.
Since integer linear programming is fixed-parameter tractable for the
parameter ''number of variables''  \cite{Len83}
the existence of an integer program for $\DCN$ using $\bigo(n^2)$ variables
implies an $\fpt$-algorithm for parameter $n$.

\begin{remark}
To formulate $\DCN$ for some directed graph $G=(V,E)$ as an integer program,
we introduce a binary variable $y_j\in\{0,1\}$, $j\in \{1,\ldots,n\}$, such that
$y_j=1$ if and only if color $j$ is used. Further, we use $n^2$ variables
$x_{i,j}\in\{0,1\}$, $i,j\in \{1,\ldots,n\}$, such that
$x_{i,j}=1$ if and only if vertex $v_i$ receives color  $j$.

In order to ensure that every color class is acyclic,
we will use the well known characterization that a digraph is acyclic if
and only if it has a topological vertex ordering. A {\em topological vertex ordering}
of a directed graph  is a linear ordering of its vertices such that  for every 
edge $(v_i,v_{i'})$ vertex $v_i$ comes before vertex $v_{i'}$ in the ordering. 
The existence of a topological ordering will be verified by  further integer 
valued variables $t_i\in \{0,\ldots,n-1\}$ realizing the ordering number of vertex $v_i$, 
$i\in \{1,\ldots,n\}$.
In order to define only feasible orderings for every color $j\in \{1,\ldots,n\}$,
for edge $(v_i,v_{i'})\in A$ with $x_{i,j}=1$ and $x_{i',j}=1$ we verify
$t_{i'}\geq t_i+1$ in condition (\ref{01-p3-cn0}).

%for every color $j$ the number of  edges by the number of
%vertices minus 1 in the graph in order to ensure acyclic color classes.
%Since the obtained graphs may get disconnected this property
%is verified for every subset $S\subseteq V$. In order to have a linear number of
%constraints, we realize every subset $S\subseteq V$
%by $n$ variables $s_i$, such that $s_i=1$ if and only if
%$v_i\in S$.

\begin{eqnarray}
\text{minimize}   \sum_{j=1}^{n}   y_j  \label{01-p1-cn0}
\end{eqnarray}
subject to
\begin{eqnarray}
\sum_{j=1}^{n} x_{i,j}  & =  &1 \text{ for every }    i \in \{1,\ldots,n\}  \label{01-p2-cn0} \\
                x_{i,j}    & \leq & y_j         \text{ for every } i,j \in  \{1,\ldots,n\} \label{01-p-cn0} \\
t_{i'}                     & \geq & t_i+1 -  n\cdot \left(1- (x_{i,j}\wedge  x_{i',j})\right)  \text{ for every } (v_i,v_{i'})\in E,  j \in \{1,\ldots,n\} \label{01-p3-cn0} \\
y_j   &\in &  \{0,1\} \text{ for every } j \in  \{1,\ldots,n\}  \label{01-p4-cn0} \\
t_i   &\in & \{0,\ldots,n-1\}  \text{ for every } i \in \{1,\ldots,n\}   \label{01-pc-cn0} \\
x_{i,j}   &\in &  \{0,1\} \text{ for every } i,j \in \{1,\ldots,n\}   \label{01-p5-cn0}
\end{eqnarray}

%The solution only works for connected graphs. ???

Equations in (\ref{01-p3-cn0})  are not in propositional logic, but can be
transformed for integer programming \cite{Gur14}.

%In order to reformulate
%them for binary integer programming, one can use the results of \cite{Gur14}.
\end{remark}

\begin{corollary}\label{fpt-n}
The Dichromatic Number problem
is in $\fpt$ when parameterized by the number of vertices $n$.
\end{corollary}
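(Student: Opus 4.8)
The plan is to combine the integer programming formulation set up in the preceding Remark with the fixed-parameter tractability of integer linear programming in the number of variables. First I would verify that the program given by the objective (\ref{01-p1-cn0}) together with the constraints (\ref{01-p2-cn0})--(\ref{01-p5-cn0}) correctly models $\DCN$. Constraint (\ref{01-p2-cn0}) forces each vertex to receive exactly one color, constraint (\ref{01-p-cn0}) records through the indicator variables $y_j$ which colors are actually used, and the ordering constraints (\ref{01-p3-cn0}) guarantee that, restricted to each color class, the variables $t_i$ realize a topological vertex ordering. By the standard characterization that a digraph is acyclic if and only if it admits a topological ordering, this ensures that every color class $c^{-1}(j)$ induces an acyclic subdigraph, so that any feasible solution yields an acyclic coloring and conversely. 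Minimizing $\sum_{j=1}^{n} y_j$ therefore returns exactly $\vec{\chi}(G)$.

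Next I would count the variables. The formulation uses $n$ binary variables $y_j$, $n^2$ binary variables $x_{i,j}$, and $n$ integer variables $t_i$, hence $\bigo(n^2)$ variables in total, a quantity bounded solely by the parameter $n$. By the classical result of Lenstra \cite{Len83}, integer linear programming can be solved by an algorithm whose running time is exponential only in the number of variables while polynomial in the remaining input size. Applying this to our program, whose coefficients are bounded by $n$ and whose encoding length is polynomial in $n$, yields an algorithm for $\DCN$ that is fixed-parameter tractable with respect to $n$, which is the claim of the corollary.

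The one technical point requiring care is that the ordering constraints (\ref{01-p3-cn0}) as stated contain the conjunction $x_{i,j}\wedge x_{i',j}$ and are thus not linear. The main work is therefore to replace each such constraint by an equivalent linear inequality using the standard linearization of products of binary variables \cite{Gur14}; concretely one may substitute the linear lower bound $x_{i,j}+x_{i',j}-1$ for the product term, so that the inequality becomes tight precisely when both indicators equal $1$. This transformation keeps the number of variables at $\bigo(n^2)$ (or in any case polynomially bounded in $n$) and leaves all coefficients bounded. The resulting instance is genuinely an integer linear program of polynomially bounded size in $\bigo(n^2)$ variables, so Lenstra's algorithm applies and the corollary follows.
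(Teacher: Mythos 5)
Your proposal is correct and follows exactly the paper's own route: it uses the integer program from the preceding Remark with $\bigo(n^2)$ variables and invokes Lenstra's result \cite{Len83} that integer linear programming is fixed-parameter tractable in the number of variables. Your explicit linearization of the conjunction in constraint (\ref{01-p3-cn0}) via $x_{i,j}+x_{i',j}-1$ is precisely the transformation the paper delegates to the citation \cite{Gur14}, so the two arguments coincide.
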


It remains to verify whether the running time of our $\xp$-algorithm for $\DCN$ can
be improved to  $n^{2^{\bigo(k)}}$, which is possible for the Chromatic Number
problem by \cite{EGW01a,KR03}.
Further, it remains open whether the
hardness of Corollary \ref{hpcw} also holds for special digraph classes and for
directed linear clique-width~\cite{GR19c}.
Furthermore, the existence of an $\fpt$-algorithm for $\DCN_r$ w.r.t.\ parameter
clique-width of the underlying undirected graph is open, see 
Table \ref{tab}.

%%%%%%%%%%%%%%%%%%%%%%%%%%%%%%%%%%%%%%%%%%%%%%%%%%%%%%%%%%%%%%%%%%%%%%
\section*{Acknowledgements} \label{sec-a}
%%%%%%%%%%%%%%%%%%%%%%%%%%%%%%%%%%%%%%%%%%%%%%%%%%%%%%%%%%%%%%%%%%%%%%

The work of the second and third author was supported
by the Deutsche
Forschungsgemeinschaft (DFG, German Research Foundation) -- 388221852

%%%%%%%%%%%%%%%%%%%%%%%%%%%%%%%%%%%%%%%%%%%%%%%%%%%%%%%%%%%%%%%%%%%%%%%%%%
%%%%%%%%%%%%%%%%%%%%%%%%%%%%%%%%%%%%%%%%%%%%%%%%%%%%%%%%%%%%%%%%%%%%%%%%%%

\bibliographystyle{alpha}
%\bibliography{/home/gurski/bib.bib}

\newcommand{\etalchar}[1]{$^{#1}$}

\end{document}